\documentclass[sigconf,nonacm]{acmart}

\usepackage{soul}
\usepackage{url}
\usepackage[utf8]{inputenc}
\usepackage{amsmath}
\usepackage{amsthm}
\usepackage{mathtools}
\usepackage{algorithm}
\usepackage{algorithmic}
\usepackage{tabularx}
\usepackage{colortbl}
\usepackage{tikz}
\usetikzlibrary{matrix,fit,decorations.pathreplacing}
\pgfdeclarelayer{background}
\pgfdeclarelayer{foreground}
\pgfsetlayers{background,main,foreground}
\usepackage{subcaption}
\usepackage{cleveref}
\usepackage{xargs}
\usepackage{makecell}

\usepackage[disable]{todonotes}
\usepackage{tikz}
\usepackage{pgfplots}
\usetikzlibrary{shapes.misc,shapes.geometric}

\newtheorem{theorem}{Theorem}

\newtheorem{proposition}[theorem]{Proposition}
\newtheorem{lemma}[theorem]{Lemma}

\newtheorem{problem}{Problem}
\newtheorem*{claim}{Claim}{\upshape\itshape}{\upshape\rmfamily}
\newenvironment{claimproof}
{\noindent \textit{Proof of the claim.} }
{$\diamond$}

\Crefname{theorem}{Thm}{Thms}
\Crefname{proposition}{Prop}{Props}

\newcommand{\abs}[1]{{\vert #1 \vert}}
\newcommand{\hg}{\ensuremath{\mathcal{H}}}
\newcommand{\he}{\ensuremath{\mathcal{E}}}

\newcommand{\Oh}{\mathcal{O}}

\DeclareMathOperator*{\argmin}{arg\,min}

\newcommandx{\problemdef}[6][3=Input,5=Question]{
	\begingroup\par\noindent\nopagebreak[4]1
	\begin{problem}\label{prob:#2}\vspace{-0.5em}\hspace{-0.5em}\colorbox{white}{\begin{tabular}{@{}l@{}}\textsc{#1}\end{tabular}}\nopagebreak[4]\end{problem}\nopagebreak[4]
	\par\noindent\hangindent=\parindent\textbf{#3}:  #4\nopagebreak[4]
	\par\noindent\hangindent=\parindent\textbf{#5}:  #6
	\par\medskip
	\endgroup
}

\newcommand{\mbsp}{MBSP}

\newcommand{\NN}{\mathbb{N}} 
 
\newcommand{\Rp}{\mathbb{R}_{\geq 0}}

\newcommand{\headerL}[2]{\node[rectangle, rounded corners, draw=black, thick, inner sep=4pt, fill=black!10,xshift=12pt,yshift=-1pt,anchor=south west, minimum height=4.2ex] at (#1.north west) {#2};}

\newcommand{\paramBox}[4]{
		\node[inner sep = 2pt,anchor = south west] (A#1) at (#2) {
                  \begin{tabularx}{\TableWidth}{X!{\vrule width 1pt}X!{\vrule width 1pt}X!{\vrule width 1pt}X}
                    
			#4
		\end{tabularx}
		};
		\node[inner sep=-0.6pt,draw=white,ultra thick,rounded corners,fit=(A#1)] () {};
		\node[inner sep=-1.4pt,draw=white,ultra thick,rounded corners,fit=(A#1)] () {};
		\node[inner sep=-2pt,draw=black,thick,rounded corners,fit=(A#1)] (#1) {};
		\begin{pgfonlayer}{foreground}
			\headerL{#1}{#3}
		\end{pgfonlayer}
}

\title{The Influence of Agent Models on the Complexity of Bus Routing}

\author{Eva Deltl}
\orcid{https://orcid.org/0000-0003-0431-2733}
\affiliation{
	\institution{TU Clausthal}
	\city{Clausthal-Zellerfeld}
	\country{Germany}
}
\email{eva.deltl@tu-clausthal.de}

\author{Christian Komusiewicz}
\orcid{https://orcid.org/0000-0003-0829-7032}
\affiliation{
	\institution{Friedrich Schiller University}
	\city{Jena}
	\country{Germany}
}
\email{c.komusiewicz@uni-jena.de}

\author{Jurek Rostalsky}
\orcid{https://orcid.org/0009-0009-8818-4337}
\affiliation{
	\institution{Friedrich Schiller University}
	\city{Jena}
	\country{Germany}
}
\email{jurek.rostalsky@uni-jena.de}

\author{Johannes Schröder}
\orcid{https://orcid.org/0009-0003-8135-5785}
\affiliation{
	\institution{TU Berlin}
	\city{Berlin}
	\country{Germany}
}
\email{j.schroeder.1@tu-berlin.de}

\author{Luca Pascal Staus}
\orcid{https://orcid.org/0009-0004-3020-1011}
\affiliation{
	\institution{Friedrich Schiller University}
	\city{Jena}
	\country{Germany}
}
\email{luca.staus@uni-jena.de}

\begin{document}
	
\begin{abstract}
	In bus routing, the task is to plan a bus route in a network with several agents, each of whom wants to travel from a starting point to a destination. A bus route should account for several factors, including agents' cost for reaching the bus stops, their travel time, or the energy consumption of the buses.
	We study the complexity of several variants of this problem, focusing on how the objective function and the models for agents' walking costs influence the problem complexity. After observing that even the simplest agent cost model leads to hardness on general networks, we consider networks with tree structure.
	Our main findings are as follows. First, allowing agent-specific cost models leads to hardness even on extremely limited trees such as stars. Second, consistent agent models (where agents differ only in their starting points and destinations) make the problem easier in some cases. Finally, allowing agents to choose between using the bus and walking directly can make the problem considerably harder. Most of our hardness results show not only classical NP-hardness but also parameterized intractability for the natural parameter~$k$, the number of bus stops.
\end{abstract}

\maketitle

\section{Introduction}
Consider a city planner who wants to establish a bus line to serve transportation demands in a spatial network. In this planning task, numerous objectives and constraints need to be considered to reach a good solution. For example, the number of stops~$k$ on the bus line could be limited. Moreover, one would like to find a fast or energy-efficient route connecting these stops, which can be modelled by assigning edge weights~$w_b$ in the city's road network. Finally, to model the demands of the population, one would need to know the sources and destinations of the potential users of the bus line, called agents in the following. Ideally, we would also know how difficult or costly it is for each agent~$a$ to reach a given bus stop. This can be captured by an individual weight function~$w_a$ for each agent~$a$. These costs may vary drastically as some users may walk, with possibly drastically different walking speeds, while others use bicycles or scooters. With this information and an objective function~$f$ for weighing the different costs at hand, we face the following problem.

\smallskip
\noindent
  \textsc{Multi Bus Stop Problem (MBSP)}\\
  \textbf{Input:} {An undirected graph $G=(V,E)$, a set of agents $A$ with weight function $w_a:E\to\Rp$ for each $a\in A$, a bus weight function $w_b:E\to\Rp$,
    an integer $k\in\NN$, and an objective function $f$.}\\
  \textbf{Task:} {Compute a bus route $B$ with $k$ stops minimizing~$f(B)$.}\\[-.5em]

The first variant of this problem was stated by Reza et al.~\shortcite{originalBus} for the case where the route has only two bus stops and the task is to minimize the cost of the bus route plus the costs of the agents for the commute from source to the start of the bus route and from the end of the bus route to their destination. Since there are only two bus stops which are frequented by all agents, this cost function, later referred to as the \emph{energy} cost function, can fully incorporate the agent's travel time into the bus weight function~$w_b$. 

Recently, Proissl and Koch~\shortcite{DBLP:conf/gis/ProisslK24} extended Reza et al.'s model by allowing bus routes with up to~$k$ stops. In this setting, agents typically use the bus for only part of their journey, so their travel time cannot be incorporated into~$w_b$. Proissl and Koch therefore proposed the \emph{time} objective, which minimizes the total agent travel cost, including access to the boarding stop, travel along the bus route, and access from the exit stop to the destination.
As we will see, allowing multiple bus stops makes MBSP computationally hard in most settings. To address this, Proissl and Koch fix, for each agent~$a$, the indices of the stops where~$a$ boards and leaves the bus. Under this assumption, optimal bus stop placements can be computed in polynomial time~\cite{originalBus,DBLP:conf/gis/ProisslK24}.

Of course, it seems premature to fix the first and last bus stop for each agent, when it is not clear at all which route the bus will take. Motivated by this observation, we consider models where the agents enjoy greater freedom of choice. More precisely, in our setting the agents choose optimal stops for entering and leaving the bus after the bus stops have been placed. Since the resulting bus route may not be convenient for every agent, we also consider a second variant of each objective function in which agents are not required to take the bus and may instead walk directly to their destinations. Altogether, this yields four objective functions~$f_\text{energy}$, $f_\text{time}$, $g_\text{energy}$, and~$g_\text{time}$, where the latter two functions model the case in which agents can walk directly.

With these extensions, MBSP becomes closely related to problems such as \textsc{$k$-Median}~\cite{HK79} and \textsc{Facility Location}~\cite{CNW83,FELLOWS20111118}, which roughly correspond to the case where the bus weight~$w_b$ is zero for all edges. Consequently, MBSP is NP-hard and also intractable in the parameterized sense with respect to the number~$k$ of bus stops. Thus, algorithms with running time~$f(k)\cdot n^{\Oh(1)}$ for input size~$n$ are unlikely. 
We therefore study how the complexity of \mbsp{} depends on two factors: the structure of the underlying spatial network and the mobility model of the agents. In our setting, the input graph represents a transportation network, and agents represent origin-destination demands. We analyze general graphs, trees, paths, and stars, and identify which combinations of network structure and agent model admit efficient exact algorithms.
To illustrate that the objectives lead to different spatial stop patterns in practice, we optimize the energy objectives on the NYC-M15 bus corridor, using existing stops as candidates.\footnote{To support reproducibility, code, preprocessing scripts, and the processed NYC-M15 instance are available at \url{https://github.com/evamichelle30/ComplexityBusRouting}}

\subsection{Our Results}
We use paths, stars, and trees as abstractions of common spatial network structures: paths model corridor-like routes, stars model access through a central hub, and trees model branching networks. In a nutshell, we show that natural variants of \mbsp{} are already hard on paths and stars and even for restricted agent mobility models, while combining network restrictions with simple mobility models can lead to tractability; see~\Cref{fig:overview}.

\begin{figure*}[t]

	\definecolor{ColorOpen}{RGB}{255, 255, 255}
	\definecolor{ColorW2h}{RGB}{255, 100, 100} 
	\definecolor{ColorW1h}{RGB}{255, 200, 100} 
	\definecolor{ColorNPh}{RGB}{255, 255, 100} 
	\definecolor{ColorP}{RGB}{150, 200, 50}    
	\definecolor{ColorUnclear}{RGB}{180,180,180}

	\newcolumntype{C}[1]{>{\centering\let\newline\\\arraybackslash\hspace{0pt}}p{#1}}
	\newcolumntype{D}{>{\centering\let\newline\\\arraybackslash\hspace{0pt}}X}

	\newcolumntype{b}{>{\hsize=1.3\hsize}D}
	\newcolumntype{m}{>{\hsize=1\hsize}D}
	\newcolumntype{s}{>{\hsize=.7\hsize}D}

	\newcommand{\columnSep}{0.26cm}
	\newcommand{\rowDist}{1.6cm}

	\newcommand{\TableWidth}{4.4cm}
	\centering \small
	\begin{tikzpicture}[xscale=0.95,yscale=0.95]

		\begin{scope}[]
			\paramBox{gen}{0,0}{General}{
				\cellcolor{ColorW2h} &
				\cellcolor{ColorW2h} &
				\cellcolor{ColorW2h} &
				\cellcolor{ColorW2h}\vspace{-0.65em}\hspace*{-0.3em}\Cref{thm:W2_on_unweighted_splitgraphs} \\
			}

			\paramBox{trees}{0,-1.15}{Trees}{
				\cellcolor{ColorW2h} &
				\cellcolor{ColorP}\vspace{-0.65em}\hspace*{-0.3em}\Cref{thm:f_energy_poly_on_consistent_trees} &
				\cellcolor{ColorP} &
				\cellcolor{ColorP} \\
			}

			\paramBox{paths}{-2,-2.85}{Paths}{
				\cellcolor{ColorP}\vspace{-0.65em}\hspace*{-0.3em}\Cref{thm:energy_poly_on_arbitrary_paths}  &
				\cellcolor{ColorP}&
				\cellcolor{ColorP} &
				\cellcolor{ColorP} \\
			}

			\paramBox{stars}{2,-2.3}{Stars}{
				\cellcolor{ColorW2h}\vspace{-0.65em}\hspace*{-0.3em}\Cref{thm:f_W2_on_arbitrarily_weighted_stars} &
				\cellcolor{ColorP} &
				\cellcolor{ColorP}&
				\cellcolor{ColorP}\\
			}

			\draw[thick] (trees) -- (gen);
			\draw[thick] (trees) -- (paths);
			\draw[thick] (trees) -- (stars);

			\node[scale=1.25, right] at (-2, 0) {$f_\text{energy}$};
		\end{scope}

		\begin{scope}[xshift=9.5cm]
			\paramBox{gen}{0,0}{General}{
				\cellcolor{ColorW2h} &
				\cellcolor{ColorW2h} &
				\cellcolor{ColorW2h} &
				\cellcolor{ColorW2h}\vspace{-0.65em}\hspace*{-0.3em}\Cref{thm:W2_on_unweighted_splitgraphs} \\
			}

			\paramBox{trees}{0,-1.15}{Trees}{
				\cellcolor{ColorW1h} &
				\cellcolor{ColorW1h} &
				\cellcolor{ColorW1h} &
				\cellcolor{ColorW1h} \\
			}

			\paramBox{paths}{-2,-2.85}{Paths}{
				\cellcolor{ColorP}\vspace{-0.65em}\hspace*{-0.3em}\Cref{thm:energy_poly_on_arbitrary_paths}  &
				\cellcolor{ColorP}&
				\cellcolor{ColorP} &
				\cellcolor{ColorP} \\
			}

			\paramBox{stars}{2,-2.3}{Stars}{
				\cellcolor{ColorW1h} &
				\cellcolor{ColorW1h} &
				\cellcolor{ColorW1h}&
				\cellcolor{ColorW1h}\vspace{-0.65em}\hspace*{-0.3em}\Cref{thm:g_energy_hard_on_unweighted_stars}\\
			}

			\draw[thick] (trees) -- (gen);
			\draw[thick] (trees) -- (paths);
			\draw[thick] (trees) -- (stars);

			\node[scale=1.25, right] at (-2, 0) {$g_\text{energy}$};
		\end{scope}

		\begin{scope}[yshift=-4.4cm]
			\paramBox{gen}{0,0}{General}{
				\cellcolor{ColorW2h} &
				\cellcolor{ColorW2h} &
				\cellcolor{ColorW2h} &
				\cellcolor{ColorW2h}\vspace{-0.65em}\hspace*{-0.3em}\Cref{thm:W2_on_unweighted_splitgraphs} \\
			}

			\paramBox{trees}{0,-1.15}{Trees}{
				\cellcolor{ColorW1h} &
				\cellcolor{ColorW1h} &
				\cellcolor{ColorW1h} &
				\cellcolor{ColorNPh}\vspace{-0.65em}\hspace*{-0.5em}\Cref{thm:f_time_hard_on_unweighted_trees} \\
			}

			\paramBox{paths}{-2,-2.85}{Paths}{
				\cellcolor{ColorW1h}\vspace{-0.65em}\hspace*{-0.4em}\Cref{thm:time_hard_on_arbitrary_paths} &
				\cellcolor{ColorUnclear} &
				\cellcolor{ColorUnclear} &
				\cellcolor{ColorP}\vspace{-0.65em}\hspace*{-0.5em}\Cref{thm:f_time_poly_on_unweighted_paths} \\
			}

			\paramBox{stars}{2,-2.3}{Stars}{
				\cellcolor{ColorW1h} &
				\cellcolor{ColorW1h} &
				\cellcolor{ColorW1h}\vspace{-0.65em}\hspace*{-0.3em}\Cref{thm:time_hard_on_unit_stars}&
				\cellcolor{ColorP}\vspace{-0.65em}\hspace*{-0.3em}\Cref{thm:f_time_poly_on_unweighted_stars} \\
			}

			\draw[thick] (trees) -- (gen);
			\draw[thick] (trees) -- (paths);
			\draw[thick] (trees) -- (stars);

			\node[scale=1.25, right] at (-2, 0) {$f_\text{time}$};
		\end{scope}

		\begin{scope}[xshift=9.5cm, yshift=-4.4cm]
			\paramBox{gen}{0,0}{General}{
				\cellcolor{ColorW1h} &
				\cellcolor{ColorW1h} &
				\cellcolor{ColorW1h} &
				\cellcolor{ColorP}\vspace{-0.65em}\hspace*{-0.3em}Prop~\ref{prop:g_time_poly_on_unweighted_graphs} \\
			}

			\paramBox{trees}{0,-1.15}{Trees}{
				\cellcolor{ColorW1h} &
				\cellcolor{ColorW1h} &
				\cellcolor{ColorW1h} &
				\cellcolor{ColorP} \\
			}

			\paramBox{paths}{-2,-2.85}{Paths}{
				\cellcolor{ColorW1h}\vspace{-0.65em}\hspace*{-0.4em}\Cref{thm:time_hard_on_arbitrary_paths} &
				\cellcolor{ColorUnclear} &
				\cellcolor{ColorUnclear} &
				\cellcolor{ColorP} \\
			}

			\paramBox{stars}{2,-2.3}{Stars}{
				\cellcolor{ColorW1h} &
				\cellcolor{ColorW1h} &
				\cellcolor{ColorW1h}\vspace{-0.65em}\hspace*{-0.3em}\Cref{thm:time_hard_on_unit_stars}&
				\cellcolor{ColorP}\\
			}

			\draw[thick] (trees) -- (gen);
			\draw[thick] (trees) -- (paths);
			\draw[thick] (trees) -- (stars);

			\node[scale=1.25, right] at (-2, 0) {$g_\text{time}$};
		\end{scope}

		\begin{scope}[xshift=-1cm, yshift=-4.3cm]
			\renewcommand{\TableWidth}{7.5cm}
			\paramBox{legend}{1.1,-3*\rowDist}{Graph restriction}{
				\centering arbitrary  &
				\centering consistent &
				\centering unit &
				\centering unweighted 
			};
		\end{scope}

		\begin{scope}[xshift=-4cm, yshift=-4.3cm]
			\node[draw=black,very thick,inner sep = 0pt,anchor=south east] at (\textwidth,-3*\rowDist + 2pt) {
				\begin{tabularx}{5.2cm}{D}
					\cellcolor{ColorW2h}NP-h and W[2]-h wrt.~$k$\\ \hline 
					\cellcolor{ColorW1h} NP-h and W[1]-h wrt.~$k$ \\ \hline
					\cellcolor{ColorNPh} NP-h \\ \hline
					\cellcolor{ColorP} Poly-time \\ \hline
				\end{tabularx}
			};
		\end{scope}
		
	\end{tikzpicture}

	\caption{
		Overview of our results. 
              }
	\label{fig:overview}
\end{figure*}
In terms of agent mobility models, represented formally by agent-specific weight functions, we consider four cases.
\begin{itemize}
	\item \textbf{Arbitrary:} each agent has its own weight function $w_a$ and $w_b$ is arbitrary.
	\item \textbf{Consistent:} all agents share the same weight function.
	\item \textbf{Unit:} all agent edge weights are equal to a constant $c\in\NN$ (equivalently, $w_a\equiv 1$ up to scaling).
	\item \textbf{Unweighted:} all agent and bus edge weights are $1$.
\end{itemize}

For~$g_\text{time}$, the unweighted case is uninteresting since it is optimal for all agents to walk directly to their destination; we exclude this case from the following discussion.
For all other objective functions, we show that the unweighted case is hard on general graphs. Note that this hardness does \emph{not} follow from the above-mentioned connection to \textsc{$k$-Center}, since that connection applies to the case where all bus weights are zero.
This hardness motivates us to consider restricted input graphs and assess, for each, how realistic the agent models can be before facing intractability.

We show for example that for~$f_\text{energy}$, MBSP is polynomial-time solvable on trees for consistent agent weights and hard for arbitrary agent weights. In contrast, for~$f_\text{time}$ all weight function models yield hard problems on trees. This indicates that minimizing~$f_\text{energy}$ is easier than minimizing~$f_\text{time}$. The same can be observed for the more restricted cases where the input is a path or a star.

Allowing agents to choose whether to take the bus also makes the problem harder: minimizing~$g_\text{energy}$ is already hard on stars in the unweighted case. On paths, however, ~$f_\text{energy}$ and~$g_\text{energy}$ are tractable even for arbitrary weight models. This case is particularly interesting because it corresponds to optimizing bus stop placement along an existing bus route.
To illustrate this application, we consider the NYC-M15 corridor, showing that the two objectives can lead to different stop sets.

\subsection{Further Related Work} 
The~\textsc{$k$-Median} problem in graphs asks to select~$k$ vertices in a graph (corresponding to the bus stops) such that the sum of the minimum distances of the graph vertices to the selected vertices is minimized. In network analysis, this problem is also known as~\textsc{Group Closeness Centrality}~\cite{EP99}. It is polynomial-time solvable on trees~\cite{Tamir96} but W[2]-hard with respect to~$k$ on bounded-degree graphs~\cite{SKMS23}. 
The most crucial differences to our setting are that there are no costs for connecting the selected vertices, in other words, no bus \emph{route}; \emph{every} graph vertex is an agent that needs to reach one of the selected vertices; and all agents have the same weight function.

The \textsc{Facility Location} problem, in particular \textsc{Connected Facility Location}~\cite{Ljubic20} with fixed number~$k$ of facilities is also closely related. Here, the selected facilities correspond to our bus stops but the connection between them is established via a Steiner Tree, not a route. Even more crucially, agents only seek to reach a facility; travel time is not modeled.
A more agent-centric study on \textsc{Facility Location} was undertaken by Aziz et al.~\shortcite{ACL+20}; unlike our work they consider the problem with capacity constraints on the facilities and problem versions where the agents do not reveal their locations. 

Another line of related work concerns set covering routing problems~\cite{MORADI2024110730},
where one must choose which vertices to visit and how to connect them. Among the
most studied variants are the \textsc{Travelling Purchaser Problem} (TPP) and
the \textsc{Ring Star Problem} (RS).
TPP asks for a minimum-cost tour visiting
markets where all required products can be bought~\cite{MANERBA2017,XiaoEtal2022}.
MBSP with $f_{\mathrm{energy}}$ is a special case: markets correspond to
potential bus stops, purchase costs encode agents' contributions, and the tour
encodes the bus route. Since the reduction preserves the underlying graph
structure, our hardness results for restricted graph classes, such as stars,
transfer directly to TPP.
In contrast, RS asks for a cycle through selected vertices, assigning all remaining vertices to them at minimum routing and assignment cost~\cite{ringstar1}. Unlike MBSP, it uses a cycle rather than a route and does not model individual source--destination pairs or travel-time preferences.

As we consider agents with start point and destination, the problem can also be considered as a pick up and delivery problem; for a survey see~\cite{KOC2020104987}. A further related problem is \textsc{School Bus Routing}~\cite{PK10} which often addresses the case where many routes must be planned simultaneously and considers a plethora of additional aspects such as distribution of the students to bus routes, adjustment of bus timeplans and school bells. These problems are considerably harder than our setting and solved with heuristics~\cite{AG20,SKS+13}. 

Finally, MBSP can be viewed as a stylized variant of \emph{transit network design}, which studies the design of public transport routes, stops, frequencies, and schedules~\citep{GuihaireHao08,Owais26,DerribleKennedy11}. 
More specifically, it is also related to work on bus-stop location and spacing~\citep{FurthRahbee00}. In contrast, we study the computational complexity of a single-route stop-selection problem with source--destination agents and agent-specific access costs.

\section{Preliminaries}
For~$n \in \NN$, we denote by~$[n]$ the set~$\{1,2,\ldots,n\}$.
\subsubsection*{Graphs and Shortest Walks}
We use standard notation from graph theory.
Let $G=(V,E)$ be an undirected graph. A \emph{walk} in $G$ is a sequence of vertices
$P=(v_1,\dots,v_\ell)$ such that $\{v_i,v_{i+1}\}\in E$ for all $i\in [l-1]$.
We say that $P$ \emph{starts} at $v_1$ and \emph{ends} at $v_\ell$.
A \emph{path} is a walk in which each vertex appears at most once.
Let $w:E\to\Rp$ be an edge-weight function.
The \emph{weight} of a walk $P=(v_1,\dots,v_\ell)$ is
\[
w(P) \ :=\ \sum_{i=1}^{\ell-1} w(\{v_i,v_{i+1}\}).
\]
For vertices $s,t\in V$, we denote by $\Pi(s,t,w)$ a minimum-weight walk from $s$ to $t$
under weights $w$. Note that $\Pi(s,t,w)$ can always be chosen to be a path.
We write~$\pi(s,t,w) \ :=\ w(\Pi(s,t,w))$ for the corresponding minimum weight.

\subsubsection*{Agents and Bus Routes}

An \emph{agent} is a tuple $a=(s,t,w)$ with $s,t\in V$ and $w:E\to\Rp$, where $s$ is
the agent's start vertex, $t$ is the agent's destination, and $w$ is the agent's edge-weight
function. We use $s_a$, $t_a$, and $w_a$ to refer to the attributes of $a$, respectively.
A \emph{bus route} with $k$~stops is a sequence of vertices~$B=(v_1,\dots,v_k)$.
Consecutive stops in $B$ need not be adjacent in~$G$ and vertices may repeat.
We write~$B[i]$ for the $i$th stop of~$B$.

Given indices $i,j\in[k]$ and bus weights $w_b$, we define $\Pi_B(i,j)\coloneqq \Pi(B[i],B[j],w_b)$ as a minimum-weight
walk
that starts at $B[i]$, ends at $B[j]$, and visits the stops of the route in order along the
subsequence from $B[i]$ to $B[j]$. More precisely, the walk must visit in order 
\[
B[i],B[i+1],\dots,B[j] \quad \text{or} \quad B[i],B[i-1],\dots,B[j]
\]
depending on whether $i\le j$ or $i>j$.
We write~$\pi_B(i,j) \ :=\ w_b(\Pi_B(i,j))$ for the minimum weight of such a constrained walk.
For an agent~$a \in A$, the walking cost of boarding at~$B[i]$ and leaving at~$B[j]$ is
\[
\pi_a(B,i,j)
:=
\pi(s_a,B[i],w_a)
+
\pi(B[j],t_a,w_a).
\]

\subsubsection*{Objective Functions}
We study the following two objectives.

The \emph{energy objective}, considered in \cite{originalBus} and \cite{DBLP:conf/gis/ProisslK24}, consists of the bus travel cost along the full route plus the sum of walking costs to and from the bus of all agents:
\[
f_{\text{energy}}(B)
\ :=\ 
\pi_B(1,k)
\ +\ 
\sum_{a\in A}\ 
\min_{i,j\in\{1,\dots,k\}}
\Bigl(
\pi_a(B,i,j)
\Bigr).
\]

The \emph{time objective} models that each agent additionally experiences the bus travel time between its boarding and exit stops:
\[
f_{\text{time}}(B)
\ :=\ 
\sum_{a\in A}\ 
\min_{i,j\in\{1,\dots,k\}}
\Bigl(
\pi_a(B,i,j)
+
\pi_B(i,j)
\Bigr).
\]
Note that \(f_\text{time}\) as defined above differs from all time objective functions of Proissl and Koch~\shortcite{DBLP:conf/gis/ProisslK24}. They model time as average travel time including pre-departure waiting, assuming that all agents start simultaneously towards a fixed start stop and that the bus departs only after the last agent arrives. In our setting, agents choose boarding and exit stops autonomously, so we do not model such agent coordination and aggregate per-agent travel times.

\paragraph{Variants allowing direct walking.}
We also consider variants in which an agent may be faster by walking directly to its destination:
\begin{itemize}
	\item $g_{\text{energy}}$ is the same as $f_{\text{energy}}$, except that each agent may alternatively prefer
	$\pi(s_a,t_a,w_a)$;
	\item $g_{\text{time}}$ is the same as $f_{\text{time}}$, except that each agent may alternatively prefer
	$\pi(s_a,t_a,w_a)$.
\end{itemize}

\subsubsection*{Parameterized Complexity}
All four objective functions can clearly be computed in polynomial time for a given bus route, which means that for each variant there is a brute-force algorithm with running time~$n^{k+\Oh(1)}$. For large~$n$, this running time becomes prohibitively large even for rather small values of~$k$. In contrast, FPT-algorithms for~$k$, that is, algorithms running in time~$f(k)\cdot n^{\Oh(1)}$ are more desirable in practice. To exclude not only polynomial running times in our lower bounds, but also FPT-running times, we use the framework of parameterized complexity~\cite{DF13,CFK+15} where each input instance consists of the classical instance and a problem-specific parameter~$k$. Basic classes of intractability are W[1]-hardness and W[2]-hardness. In other words, it is assumed that W[1]-hard and W[2]-hard problems do not admit FPT-algorithms.
To show W[1]-hardness or W[2]-hardness of a problem, one describes a parameterized reduction from a W[1]-hard or W[2]-hard problem, respectively. A \emph{parameterized reduction} is a reduction that runs in FPT-time and for which the parameter of the constructed instance depends only on the parameter of the input instance; for a formal definition, we refer the reader to the standard textbooks on parameterized complexity~\cite{DF13,CFK+15}. 

\section{Hardness on General Unweighted Graphs}
\label{sec:general_unweighted}

To motivate the subsequent studies where we look at the hardness of \mbsp{} on trees, stars, and paths, we first show that all objective functions except~$g_{\text{time}}$ are W[2]-hard for the parameter~$k$ on general graphs even in the unweighted case. For this we use reductions from \textsc{Hitting Set}.
A \textsc{Hitting Set} instance consists of a hypergraph~\(\hg = (V, \he)\) and a budget~\(k\). The goal is to decide if there is a \emph{hitting set}~\(S \subseteq V\) with~\(\abs{S} = k\) and~\(S \cap e \neq \emptyset\) for all~\(e \in \he\).
\textsc{Hitting Set} is W[2]-hard for the parameter~$k$~\cite{DF13}.

\begin{theorem}\label{thm:W2_on_unweighted_splitgraphs}
	\mbsp{} with objective function~\(f_{\text{energy}}\),~\(g_{\text{energy}}\), or \(f_{\text{time}}\) is W[2]-hard for~\(k\) on general graphs in the unweighted case.
\end{theorem}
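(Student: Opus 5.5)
We reduce from \textsc{Hitting Set}, building a split graph. Given $(\hg=(V,\he),k)$, we construct $G$ as follows. The vertex set $V$ of the hypergraph forms a clique. For each hyperedge $e\in\he$ we add a new vertex $x_e$ adjacent exactly to the vertices of $e$. All agent and bus weights are $1$, and the number of bus stops stays $k$, so the parameter is preserved. For each hyperedge $e$ we place agents with $s_a=t_a=x_e$. To make the hyperedge agents dominate any other effect, we could instead attach $M$ pendant copies of $x_e$ for a large $M$ (for instance $M=k+1$, or polynomial in the input) and put one agent on each copy. In that variant we must make sure the copies still only see $e$.

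The key point is that every stop of an optimal route can be assumed to lie in the clique $V$. A stop at $x_e$ can be replaced by any neighbour $v\in e$, and this costs at most a constant extra. Since the clique has diameter $1$, consecutive clique stops cost exactly $1$ bus weight each (or $0$ if a stop is repeated). So for $f_\text{energy}$ the bus cost is at most $k-1$. An agent at $x_e$ walks $1$ to board and $1$ to alight if some stop lies in $e$, and otherwise at least $2$ each way. This gives the threshold: a hitting set of size $k$ exists if and only if $f_\text{energy}(B)\le (k-1)+2|\he|$ for some route $B$.

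\emph{If} direction: take the hitting set as the route. The cost is then exactly $(k-1)+2|\he|$.

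\emph{Only if} direction: the bus cost is at least $0$ and at most $k-1$, while every unhit hyperedge adds at least $2$. If bus costs can be strictly smaller, the slack must be handled. One option is to duplicate each agent $k$ times, so that the agent term dominates. Then a route achieving the bound has every hyperedge hit by one of its stops in $V$. A stop at some $x_e$ can serve agents within distance $1$ only in $x_e$ itself, and replacing it by a vertex of $e$ keeps those agents at distance $1$. The stops lying in $V$ then hit all hyperedges. If fewer than $k$ distinct vertices result, we pad them to size $k$.

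For $g_\text{energy}$ the construction is the same, since $s_a=t_a$ means walking directly costs $0$, which would trivialize the instance. We therefore change the agents to $s_a=x_e$ and $t_a=x_e'$, where $x_e'$ is a twin vertex also adjacent exactly to $e$. Then direct walking costs $2$, while using the bus costs $1+1=2$ when $e$ is hit. Because the bus adds to the energy objective, walking ties must be broken: we lengthen the path, making $t_a$ a vertex at distance $2$ from $e$ through a private path. Then the bus option costs $1+2=3$ versus direct walking $\ge 4$. For $f_\text{time}$ we use agents that board and alight at the same stop, so bus time is $0$. The cost is then exactly $2$ per agent if $e$ is hit and larger otherwise, with no route term at all, which makes this the cleanest case.

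The main obstacle is calibrating the thresholds: accounting for the variable bus-route cost in the energy objectives, and for the direct-walk option in $g_\text{energy}$, so that a non-hitting route cannot compensate. I would first try agent multiplicity $k$ per hyperedge to dominate the bus term, and then verify the stop-relocation argument carefully.
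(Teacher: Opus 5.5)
There is a genuine gap, at the step you call the key point. With $s_a=t_a=x_e$, a stop placed on $x_e$ itself gives agent $a_e$ cost $0$. Your accounting (``$1$ to board and $1$ to alight if some stop lies in $e$, otherwise at least $2$ each way'') ignores this case. Relocating such a stop to some $v\in e$ is also not cost-neutral: it raises $a_e$'s cost by $2$ but may save only one bus edge, for example when $x_e$ is an end stop. So the threshold can be met without a hitting set.

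\emph{Counterexample for $f_{\text{energy}}$.} Take $k=3$ and $\he=\{\{a\},\{b\},\{c\},\{v\}\}$, whose minimum hitting set has size $4$. The route $(x_{\{a\}},v,x_{\{b\}})$ has bus cost $4$ and agent cost $0+0+4+2$, so its total is $10=(k-1)+2\abs{\he}$. Duplicating every agent $k$ times makes things worse (cost $22$ against threshold $26$), because a stop on $x_e$ then saves $2k$.

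\emph{Counterexample for $f_{\text{time}}$.} For $k=2$ and $\he=\{\{a\},\{b\},\{c\}\}$, the route $(x_{\{a\}},x_{\{b\}})$ costs $0+0+6=2\abs{\he}$.

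The variant you mention only in passing does repair both objectives, under two conditions. The $M$ copies must be twins of $x_e$ (each adjacent exactly to $e$, one agent each) rather than leaves attached to $x_e$, and $M\ge 3k$. Then a stop on a twin serves only one agent, while a hyperedge not hit by a clique stop pushes the agent cost to at least $2M\abs{\he}+2M-4k$. This exceeds both thresholds, so counting alone forces the at most $k$ clique stops to hit every hyperedge. The paper avoids the issue differently: all its agents have $s_a\neq t_a$. It uses a common destination $t$ adjacent to all of $V$ for the energy objectives, and twin pairs $s_e,t_e$ for $f_{\text{time}}$. Hence no agent can drop below a fixed cost by having a stop on its own vertex.

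Your $g_{\text{energy}}$ construction fails outright. With $t_a=x'_e$ a twin of $x_e$, walking directly and riding both cost $2$, so there is a tie. Your private-path fix does not break it: the direct walk from $x_e$ must pass through $e$ and costs $1+2=3$, not ``$\ge 4$''. That is no more than any bus option when all stops lie in $V$. Hence hitting $e$ gives $a_e$ no advantage over walking, and the objective carries no information about hitting sets.

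The direct-walk option only creates pressure if the bus delivers agents to a point that walking reaches strictly later. The paper achieves this with the destination $t$ adjacent to all of $V$, which is forced to be an additional stop (hence $k+1$ stops). A bus user then pays $1$ against $2$ for walking. The paper still needs an exchange argument to handle stops placed on the vertices $s_e$.
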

\begin{proof}
  We first show the statement for~\(g_{\text{energy}}\).
  Given an instance \((\hg = (V, \he),k)\) We construct an \mbsp{} instance on a graph~\(G = (V^\prime, E)\) as follows:
For each vertex~\(v \in V\) there is a vertex~\(x_v \in V^\prime\), and we connect all those vertices to form a clique in~\(G\).
Furthermore, for each edge~\(e \in \he\) we add a vertex~\(s_{e}\) to~\(V^\prime\) and for all~\(v \in e\) we add edges~\(\{s_e, x_v\}\) to~\(E\).
Additionally, we introduce a destination vertex~\(t\), which is connected to \(x_v\) for each~\(v \in V\).
There is an agent~\(a_e\) for each edge~\(e \in \he\) who travels from~\(s_e\) to~\(t\). We consider the unweighted case, that is, we set all edge weights for the agents and the bus to~$1$. We set the number of bus stops to~$k'\coloneqq k + 1$.

	We next show the correctness of the reduction.

	\emph{Claim}: \(\hg\) has a hitting set of size~\(k\) if and only if the \mbsp{} instance has a solution~$B$ with~\(g_{\text{energy}} \leq \abs{\he} + k\).

	\((\Rightarrow)\) Let~\(S\) be a hitting set of size~\(k\) in~\(\hg\).
	Then, by placing a bus stop at~\(t\) and at every~\(x_v\) with~\(v \in S\), we obtain a bus route~$B$ with~\(k + 1\) bus stops.
	The objective function value of~$B$ is at most~\(\abs{\he} + k\): for every agent walking one edge suffices to reach the closest bus stop and drive to~$t$, and~\(k\) edges connect the~\(k + 1\) bus stops.

	\((\Leftarrow)\) Let~$B$ be an optimal solution of the MBSP instance.
	We show that $t$ is a bus stop and that for each edge~$e\in \he$ at least one of the vertices in~$N(s_e)\cup \{ s_e \}$ is a bus stop. Then, we can construct a hitting set of size at most~\(k\) by first adding~$v\in V$ to the hitting set if~$x_v$ is a bus stop and then adding one vertex from each edge~$e\in\he$ if~$s_e$ is a bus stop.
	
	Suppose some vertex is the location of two bus stops. Then either all agents are adjacent to a bus stop and \(t\) is a bus stop, or we can move a duplicate bus stop to a vertex adjacent to an agent who is not yet adjacent to a bus stop or to \(t\) without increasing the objective value since we save a cost of at least 1 for the agents while prolonging the bus tour by at most 1.
	We can repeat this process until there are no duplicate bus stop locations or all agents are adjacent to a bus stop and \(t\) is a bus stop.
	For the rest of the proof we assume that there are no two bus stops at the same vertex.

	If for some edge~$e\in \he$ there is a bus stop at~$s_e$ but not in~$N(s_e)$, then we can move the bus stop from~$s_e$ to an arbitrary vertex in~$N(s_e)$. This increases the cost of agent~$a_e$ by~$1$ but decreases the bus cost by at least~$1$.
  Connecting the~\(k + 1\) bus stops contributes at least~\(k\) to the objective function.
	Hence, the total contribution of the agents is at most~\(\abs{\he}\).
  If there is an agent~\(a_e\) who contributes more than~\(1\) to the objective function value, then there must be another agent~\(a_{e^\prime}\) who contributes~\(0\) to the objective function value.
	Note that this implies that there must be a bus stop in~\(s_{e^\prime}\) and therefore also in~\(N(s_{e^\prime})\).
	We can move this bus stop from~\(s_{e^\prime}\) to a neighbor of~\(s_e\) without increasing the objective function value because the contribution of at least agent~\(a_e\) is decreased by at least~$1$ and the contribution of agent~\(a_{e^\prime}\) is increased by~\(1\).
  If there are no more agents who contribute more than~\(1\) to the objective function value, then for every edge~$e\in \he$ there must be at least one bus stop in~$N(s_e)\cup \{ s_e \}$. To construct a hitting set of size at most~$k$ we need to look at two cases. In the first case~$t$ is a bus stop and we are done. In the second case~$t$ is not a bus stop. This implies that for all~$e\in \he$ the vertex~$s_e$ is a bus stop since the agent~$a_e$ contributes exactly~$1$ to the objective function by walking over an edge that leads to~$t$. The bus route must therefore contribute at most~$k$ to the objective function which is only possible if every agent $a_e$ has a bus stop in $N(s_e)$. This implies that there are at most $k$ edges in $\he$ which gives us a trivial solution.

Altogether, this shows the correctness of the reduction. Since the construction of the \mbsp{} instance forces all agents to use the bus in the optimal solution, the result carries directly over to \(f_\text{energy}\).

To show hardness for~$f_{\text{time}}$, we slightly modify the construction: First, each agent \(a_e\) now has its own destination vertex \(t_e\) which has the same neighborhood as \(s_e\). Second, we set the number~$k'$ of bus stops to~$k$.
	It remains to show the following.

	\emph{Claim}:  \(\hg\) has a hitting set of size~\(k\) if and only if the \mbsp{} instance has a solution~$B$ with~\(f_{\text{time}} \leq 2\abs{\he}\).

	\((\Rightarrow)\) Let~\(S\) be a hitting set of size~\(k\) in~\(\hg\).
	Then, placing a bus stop at~\(x_v \in V^\prime\) for all~\(v \in S\) is a solution to the \mbsp{} with~\(k\) bus stops on~\(G\). 
	By construction of the \mbsp{} instance on~\(G\) and the definition of~\(S\) each agent contributes~\(2\) to the objective function value, which sums up to~\(2\abs{\he}\).

	\((\Leftarrow)\) Let~$B$ be an optimal solution of the \mbsp{} instance. Because no agent can contribute less than~\(2\) to the objective function value, all agents must contribute exactly~\(2\).
	This implies that for each agent~\(a_e\) at least one of the vertices in~\(N(s_e) \cup \{s_e, t_e\}\) must be a bus stop. We can use this to easily construct a hitting set of size at most~$k$ by adding \(x_v\) to \(S\) if \(x_v\) is a bus stop and adding a vertex from \(N(s_e)\) to \(S\) whenever \(s_e\) or \(t_e\) is a bus stop.
\end{proof}

In contrast to Theorem~\ref{thm:W2_on_unweighted_splitgraphs}, \mbsp{} is trivial with the~$g_{\text{time}}$ objective function. This is because in the unweighted case an agent never benefits from using the bus as they can always just walk directly to their end point. Consequently, any bus route with~$k$ stops is an optimal solution, which gives the following.

\begin{proposition}\label{prop:g_time_poly_on_unweighted_graphs}
  \mbsp{} with objective function~\(g_\text{time}\) can be solved in polynomial time on general graphs in the unweighted case.
\end{proposition}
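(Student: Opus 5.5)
The plan is to show that, in the unweighted case, every agent's contribution to $g_\text{time}$ is exactly $\pi(s_a,t_a,w_a)$ for every bus route $B$. Then $g_\text{time}(B)=\sum_{a\in A}\pi(s_a,t_a,w_a)$ does not depend on $B$, so any bus route is optimal. Since $g_\text{time}$ takes, for each agent, the minimum of the direct-walking cost and the bus option, we immediately have that the contribution is at most $\pi(s_a,t_a,w_a)$. The work is in proving the matching lower bound for the bus option.

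For the lower bound, fix an agent $a$ and indices $i,j\in[k]$. The bus option costs
\[
\pi(s_a,B[i],w_a)+\pi_B(i,j)+\pi(B[j],t_a,w_a).
\]
In the unweighted case $w_a\equiv w_b\equiv 1$, so all three terms are measured in the same metric, namely graph distance $d$. The constrained walk $\Pi_B(i,j)$ is a walk from $B[i]$ to $B[j]$, so its weight is at least $d(B[i],B[j])$. The triangle inequality for shortest-path distances then gives
\[
d(s_a,B[i])+d(B[i],B[j])+d(B[j],t_a)\ \ge\ d(s_a,t_a)=\pi(s_a,t_a,w_a).
\]
Concretely, concatenating the three walks yields an $s_a$--$t_a$ walk whose weight is the left-hand side. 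This bound holds for every choice of $i,j$, so taking the minimum over all options shows the contribution is exactly $\pi(s_a,t_a,w_a)$.

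For the algorithm, we output an arbitrary route, for instance $k$ copies of a single vertex, which is allowed because stops may repeat. We compute nothing else, so the running time is clearly polynomial. If the optimal value is also required, it is obtained as the sum of $|A|$ shortest-path distances, which BFS computes in polynomial time.

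There is no real obstacle here. The only point needing care is that the bus weights and the agent weights coincide, which is exactly what makes the concatenation argument valid. I would note explicitly that under the weaker \emph{unit} model, where the bus can be cheaper than walking, this argument fails.
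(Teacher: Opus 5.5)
Your proposal is correct and takes the same approach as the paper. Because agent and bus weights coincide, the bus option is never cheaper than walking directly, so $g_\text{time}$ is the same for every route and any route with $k$ stops is optimal. Your concatenation/triangle-inequality argument simply spells out what the paper states in one line.
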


\section{Minimizing Energy on Trees}
\label{sec:energy}

\begin{figure*}[t]
	\centering
		\begin{subfigure}[t]{0.33\textwidth}
	\centering
	
	\begin{tikzpicture}[yscale=0.65]
							
		\node[fill,inner sep=2pt,minimum size=0pt] (R) at (0,0) [circle,draw] {};
		\node at (R) [above,yshift=1mm] {\large $r$};
		\node[fill,inner sep=2pt,minimum size=0pt] (A) at (-2,-1.5) [circle,draw] {};
		\node at (A) [left,xshift=-1mm] {\large $a$};
		\node[fill,inner sep=2pt,minimum size=0pt] (B) at (0,-1.5) [circle,draw] {};
		\node at (B) [left,xshift=-1mm] {\large $b$};
		\node[fill,inner sep=2pt,minimum size=0pt] (C) at (2,-1.5) [circle,draw] {};
		\node at (C) [left,xshift=-1.5mm] {\large $c$};
		\node[fill,inner sep=2pt,minimum size=0pt] (D) at (2,-3) [circle,draw] {};
		\node at (D) [left,xshift=-1.5mm] {\large $d$};

		\draw (R) -- (A);
		\draw (R) -- (B);
		\draw (R) -- (C);
		\draw (C) -- (D);
		
	\end{tikzpicture}
	\caption{After rooting the tree at $r$.}
	\label{fig:tree_dp_example_a}
	\end{subfigure}
		\begin{subfigure}[t]{0.33\textwidth}
	\centering
		
	\begin{tikzpicture}[yscale=0.65]
							
		\node[fill,inner sep=2pt,minimum size=0pt] (R) at (0,0) [circle,draw] {};
		\node at (R) [above,yshift=1mm] {\large $r$};
		\node[inner sep=2pt,minimum size=0pt] (CR) at (-2.25,-1.5) [circle,draw] {};
		\node at (CR) [left,xshift=-1mm] {\large $c_r$};
		\node[fill,inner sep=2pt,minimum size=0pt] (A) at (-0.75,-1.5) [circle,draw] {};
		\node at (A) [left,xshift=-1mm] {\large $a$};
		\node[fill,inner sep=2pt,minimum size=0pt] (B) at (0.75,-1.5) [circle,draw] {};
		\node at (B) [left,xshift=-1mm] {\large $b$};
		\node[fill,inner sep=2pt,minimum size=0pt] (C) at (2.25,-1.5) [circle,draw] {};
		\node at (C) [left,xshift=-1.5mm] {\large $c$};
		\node[inner sep=2pt,minimum size=0pt] (CC) at (1.75,-3) [circle,draw] {};
		\node at (CC) [left,xshift=-1mm] {\large $c_c$};
		\node[fill,inner sep=2pt,minimum size=0pt] (D) at (2.75,-3) [circle,draw] {};
		\node at (D) [left,xshift=-1mm] {\large $d$};

		\draw (R) -- (A);
		\draw (R) -- (B);
		\draw (R) -- (C);
		\draw[dashed] (R) -- (CR);
		\draw (C) -- (D);
		\draw[dashed] (C) -- (CC);
		
	\end{tikzpicture}
	\caption{After adding dummy children.}
	\label{fig:tree_dp_example_b}
	\end{subfigure}
		\begin{subfigure}[t]{0.33\textwidth}
	\centering
		
	\begin{tikzpicture}[yscale=0.65]
							
		\node[fill,inner sep=2pt,minimum size=0pt] (R) at (0,0) [circle,draw] {};
		\node at (R) [above,yshift=1mm] {\large $r$};
		\node[inner sep=2pt,minimum size=0pt] (CR) at (-0.75,-0.75) [circle,draw] {};
		\node at (CR) [left,xshift=-1mm] {\large $c_r$};
		\node[inner sep=2pt,minimum size=0pt] (X2) at (0.75,-0.75) [circle,draw] {};
		\node at (X2) [left,xshift=-1mm] {\large $x_2$};
		\node[fill,inner sep=2pt,minimum size=0pt] (A) at (0,-1.5) [circle,draw] {};
		\node at (A) [left,xshift=-1mm] {\large $a$};
		\node[inner sep=2pt,minimum size=0pt] (X1) at (1.5,-1.5) [circle,draw] {};
		\node at (X1) [left,xshift=-1mm] {\large $x_1$};
		\node[fill,inner sep=2pt,minimum size=0pt] (B) at (0.75,-2.25) [circle,draw] {};
		\node at (B) [left,xshift=-1mm] {\large $b$};
		\node[fill,inner sep=2pt,minimum size=0pt] (C) at (2.25,-2.25) [circle,draw] {};
		\node at (C) [left,xshift=-1mm] {\large $c$};
		\node[inner sep=2pt,minimum size=0pt] (CC) at (1.5,-3) [circle,draw] {};
		\node at (CC) [left,xshift=-1mm] {\large $c_c$};
		\node[fill,inner sep=2pt,minimum size=0pt] (D) at (3,-3) [circle,draw] {};
		\node at (D) [left,xshift=-1mm] {\large $d$};

		\draw[dashed] (R) -- (CR);
		\draw[dashed] (R) -- (X2);
		\draw (X2) -- (A);
		\draw[dashed] (X2) -- (X1);
		\draw (X1) -- (B);
		\draw (X1) -- (C);
		\draw[dashed] (C) -- (CC);
		\draw (C) -- (D);
		
	\end{tikzpicture}
	\caption{After binarizing the tree.}
	\label{fig:tree_dp_example_c}
	\end{subfigure}
	\caption{Example of the input preprocessing in the proof of Theorem \ref{thm:f_energy_poly_on_consistent_trees}. The hollow vertices are the newly added vertices and the dashed edges have weight~$0$ for both weight functions~$w_a$ and~$w_b$.}
	\label{fig:tree_dp_example}
\end{figure*}

Motivated by the hardness on general graphs, we now consider the two energy objective functions on trees. We start by showing that~$f_{\text{energy}}$ can be solved in polynomial time on trees if we have consistent agent weights.

\begin{theorem}\label{thm:f_energy_poly_on_consistent_trees}
  \mbsp{} with objective function~$f_{\text{energy}}$ can be solved in~$\Oh(n^4 k^2 + |A|)$ time on trees with consistent agent weights.
\end{theorem}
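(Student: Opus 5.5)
The first step is to simplify the objective. With consistent agent weights $w_a\equiv w$, the minimum over $i,j$ in $f_{\text{energy}}$ separates into two independent minima, because the boarding stop and the exit stop can be chosen independently. So each agent pays $\pi(s_a,S,w)+\pi(t_a,S,w)$, where $S$ is the set of vertices occupied by stops and $\pi(x,S,w)$ is the $w$-distance from $x$ to its nearest stop. I would first record, in $\Oh(|A|)$ time, a demand $c(x)$ for each vertex $x$: the number of agent endpoints (sources and destinations) located at $x$. The agent part of the objective is then $\sum_x c(x)\,\pi(x,S,w)$.

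Next I would handle the bus part. Stops may repeat and the order of the route is ours to choose, so it suffices to choose a set $S$ with $1\le |S|\le k$ of distinct vertices. In a tree, the cheapest walk visiting $S$ in some order is $2\,w_b(T_S)-\operatorname{diam}_{w_b}(T_S)$, where $T_S$ is the minimal subtree spanning $S$ and the diameter is the longest $w_b$-path between two vertices of $S$. Equivalently, each edge of $T_S$ is traversed once if it lies on the path between the two ends of the route, and twice otherwise. Hence the problem becomes: choose $S$ with $|S|\le k$ and two route ends $p,q\in S$ so as to minimize
\[
\sum_{e\in T_S} m(e)\,w_b(e)+\sum_x c(x)\,\pi(x,S,w),
\]
where $m(e)=1$ if $e$ lies on the $p$--$q$ path and $m(e)=2$ otherwise. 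For the dynamic program I would root the tree and binarize it as in \Cref{fig:tree_dp_example}, inserting dummy vertices joined by edges of weight $0$ under both $w$ and $w_b$. This does not change any distance, and a stop at a dummy vertex can be moved to an adjacent original vertex at no cost.

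The dynamic program follows Tamir's $k$-median algorithm on trees, extended with route information. For each vertex $v$ the table is $D[v][j][u][\sigma]$. Here $j\le k$ is the number of stops in the subtree $T_v$, and $u$ is the stop that serves $v$; $u$ may lie inside or outside $T_v$, giving $\Oh(n)$ choices. The component $\sigma\in\{0,1,2\}$ is the number of route ends inside $T_v$, together with a flag recording whether $T_v$ contains any stop. The table entry is the cheapest cost of all agent endpoints in $T_v$ plus all bus edges inside $T_v$. The edge from $v$ to its parent is charged in the parent's step. Its multiplicity is $0$ if $T_v$ contains no stop or every stop lies inside $T_v$; it is $1$ if $\sigma=1$; and it is $2$ otherwise.

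Two facts about trees keep the combination step consistent. First, if $v$ is served by $u$ and a child $c$ is served by some $u'\ne u$, then $u'$ must lie in $T_c$. Second, since the child is closer to $u'$ than to $u$, $u'$ must be at least as close to $c$ as $u$ is, with ties broken by a fixed rule. At a binary node $v$ with children $c_1,c_2$, I would enumerate the split $j=j_1+j_2$ (or $j_1+j_2+1$ if $v\in S$), the serving stops $u_1,u_2$ of the children, and a distribution of $\sigma$ between them. This gives roughly $n\cdot k\cdot n$ states per vertex, each combined over $\Oh(nk)$ choices, which yields the bound $\Oh(n^4k^2)$.

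I expect the main obstacle to be the correctness of the nearest-stop bookkeeping when the serving stop $u$ lies outside $T_v$. One has to check that the global choice is consistent: every vertex must actually be assigned to its nearest stop, or at least the minimization must never undercount. I would argue that overcounting is harmless, since minimization over all assignments includes the true nearest one. Undercounting is excluded by requiring that every guessed $u$ is actually in $S$; this is enforced by propagating the guess upward until $u$'s own subtree is processed. A secondary point to verify is the correctness of the route-end accounting $\sigma$ and the resulting edge multiplicities $m(e)$, which equals the claimed formula $2w_b(T_S)-\operatorname{diam}$ for the cost of the walk.
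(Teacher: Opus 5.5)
Your plan has the same skeleton as the paper's dynamic program. It uses endpoint demands, a rooted binarized tree with weight-$0$ dummy edges, the number of stops and the number $\sigma$ of route ends in $T_v$ (the paper's $p$), and the same $0/1/2$ multiplicity rule for the edge to a child. Your formula $2w_b(T_S)-\operatorname{diam}_{w_b}(T_S)$ justifies that rule, which the paper only asserts. The real difference is the nearest-stop bookkeeping: you keep one serving stop $u$ in Tamir's style, whereas the paper keeps the pair $(b_{in},b_{out})$. That compression is sound. For $x\in T_v$ and a stop $y\notin T_v$ we have $\pi(x,y,w)=\pi(x,v,w)+\pi(v,y,w)\ge\pi(x,v,w)+\pi(v,u,w)\ge\pi(x,u,w)$, and your overcounting argument with the two local constraints handles consistency. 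The table then has $\Oh(n^2k)$ entries, each combined over $\Oh(kn^2)$ choices of $(j_1,u_1,u_2)$, so $\Oh(n^4k^2+|A|)$ survives (your per-vertex count is mislabelled, but the product is right).

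What the compression discards, however, is a bit your multiplicity rule needs. For $\sigma=2$, the edge from $v$ to its parent is charged $0$ if every stop lies in $T_v$ and twice its weight otherwise. Since you let $|S|$ range over $1,\dots,k$ rather than fixing it, neither $j$ nor a serving stop $u\in T_v$ decides this. The flag you add (``$T_v$ contains a stop'') is just $j\ge1$. The missing information concerns the \emph{outside} of $T_v$: the paper has it as $b_{out}\neq\bot$, and in its $\delta$ it tests it as $k^\ell=k$, since it fixes the total to $k$. The repair is cheap: add a bit $o_v$ meaning ``some stop lies outside $T_v$'', set it to false at the root, and pass it down via $o_{c_1}=o_v\lor[j_2\ge1]\lor[v\in S]$ (symmetrically for $c_2$). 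This only doubles the table. Rerunning the DP for each exact number of locations $K\le k$ would instead cost another factor $k$ and break the stated bound. With this bit your argument goes through. Your set formulation also has the benefit of explicitly covering repeated stops and fewer than $k$ distinct locations, on which the paper's recurrence (at most one stop per leaf, $k$ stops in total) is silent.
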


\begin{proof}
  We describe a dynamic programming algorithm. First, for~$f_{\text{energy}}$ and consistent weights, we can simplify the problem by turning each agent~$a = (s_a, t_a, w_a)$ into two agents~$a_s = s_a$ and~$a_t = t_a$ that both only have a starting point. This is because the agent always needs to enter the bus and it does not matter how far the agent travels in the bus. We write~$A_v$ to denote the set of agents that have~$v$ as their start vertex. We can precalculate all of these sets in time~$\Oh(n + |A|)$. For the algorithm to work we modify the input tree so that it becomes a rooted and binary tree where we only need to place bus stops at leaves:	

	Let~$T = (V, E)$ be the input tree. We start by rooting~$T$ at an arbitrary vertex~$r$ as shown in Figure~\ref{fig:tree_dp_example_a}. For each vertex~$v\in V$ with at least one child in~$T$ we add a new dummy child~$c_v$, and we set the weight of the edge~$\{ v, c_v \}$ to zero for both~$w_a$ and~$w_b$. An example of this is shown in Figure~\ref{fig:tree_dp_example_b}. Notice that placing a bus stop at~$v$ is now equivalent to placing a bus stop at~$c_v$. This already ensures that at least one optimal solution exists where all bus stops are at leaves. Next, we binarize the tree. Let~$v$ be a vertex with at least three children~$v_1, v_2, v_3$. We add a new child~$x$ to~$v$ and change~$v_2$ and~$v_3$ such that they are now children of~$x$ instead of~$v$. We set the weight of the edge~$\{ v, x \}$ to zero for both~$w_a$ and~$w_b$ and we set the weights of the edges~$\{x, v_2\}$ and~$\{x, v_3\}$ to the weights of the old edges~$\{v, v_2\}$ and~$\{v, v_3\}$, respectively.
	In Figure~\ref{fig:tree_dp_example_c} two extra vertices were added since the vertex~$r$ had four children. Note that these changes only lead to a linear increase in the size of the tree.

  Let~$T$ be the final binarized tree. We write~$L$ to denote the set of leaf vertices in~$T$ and~$L_v$ to denote the set of leaf vertices in the subtree with root~$v$.
  We also order the tree, meaning each non-leaf vertex~$v$ has a fixed left and right child.
  As a final preparation we calculate all pairwise distances in the tree for~$w_a$.

  We now define the arguments of the dynamic programming table~$D[v, k', b_{in}, b_{out}, p]$ as follows:
  \begin{itemize}
    \item $v$: the root of the subtree we are currently looking at.
    \item $k'$: the number of bus stops in the subtree of~$v$.
    \item $b_{in}$: a bus stop in the subtree of~$v$ where~$\pi(v, b_{in}, w_a)$ is minimal, it is~$\bot$ if there is no such bus stop.
    \item $b_{out}$: a bus stop outside the subtree of~$v$ where~$\pi(v, b_{out}, w_a)$ is minimal, it is~$\bot$ if there is no such bus stop.
    \item $p$: the number of stops in the subtree of~$v$ where the overall bus route starts or ends, can either be~$0$,~$1$, or~$2$.
  \end{itemize}
	
  Each entry in the table is the sum of two parts. First, for all agents in the subtree of~$v$, the sum of their distances to their closest bus stop (even if this bus stop is outside the subtree). Second, the sum over the bus weights of each edge $e$ in the subtree of~$v$ multiplied by how often the bus traverses $e$.  To calculate the first part we may need to calculate the distance~$\pi(v,b,w_a)$ between a vertex~\(v\) and a bus stop~$b$. For simplicity, if~$b = \bot$ we set~$\pi(v,b,w_a) = \infty$.

  To calculate the table entries~$D[\ell, k', b_{in}, b_{out}, p]$ for a leaf~$\ell$ we only need to consider two cases. If~$\ell$ is a bus stop, then the agents that start at~$\ell$ do not need to walk anywhere. If~$\ell$ is not a bus stop, then the agents that start at~$\ell$ walk to~$b_{out}$. The second part is always~$0$ since leaves do not have any children:
\begin{equation*}
\begin{cases}
      0                                        & \text{if } k' = 1, b_{in} = \ell, p \leq k',                 \\
      |A_{\ell}| \cdot \pi(\ell, b_{out}, w_a) & \text{if } k' = 0, b_{in} = \bot, b_{out} \not= \bot, p = 0, \\
      \infty                                   & \text{otherwise}.
    \end{cases}
\end{equation*}

  We now look at a non-leaf vertex~$v$ with left child~$\ell$ and right child~$r$. We can immediately set the table entry to~$\infty$ if~$k' < p$ or~$k' = 0$ but~$b_{in} \not= \bot$ since there is no feasible solution with these properties. Otherwise, we use the following recurrence to calculate the table entry~$D[v, k', b_{in}, b_{out}, p]$. We will explain each part of it in the following paragraphs:
\begin{equation*}
    \min\limits_{\substack{
    k^{\ell} \in \{0, \hdots, k'\} \\
    (b^{\ell}_{in}, b^{\ell}_{out}, b^r_{in}, b^r_{out}) \in valid(\cdot) \\
    p^{\ell} \in \{0, \hdots,  p\}
    }}
    \left(
    \begin{aligned}
          & D[v^{\ell}, k^{\ell}, b^{\ell}_{in}, b^{\ell}_{out}, p^{\ell}] \\
        + & D[v^r, k' - k^{\ell}, b^r_{in}, b^r_{out}, p - p^{\ell}]       \\
        + & |A_v| \cdot \min_{b\in \{ b_{out}, b_{in} \}} \pi(v, b, w_a)   \\
        + & \delta(v, k^{\ell}, k' - k^{\ell}, p^{\ell}, p - p^{\ell})
      \end{aligned}
    \right)
\end{equation*}

The general idea is to go through all combinations of table entries for the left and right child of~$v$ that fulfill the constraints of the table entry we are trying to calculate. Once we have fixed the table entries of the two children, we can calculate where the agents that start at~$v$ walk to and how many times the bus traverses the edges between~$v$ and its two children.

To check all relevant table entries of the two children, we first go through all possible values of~$k^\ell$ and~$p^\ell$, thereby distributing the~$k'$ bus stops and the~$p$ end stops between the two child subtrees. Since~$k' \leq k$ and~$p \leq 2$, there are only~$\Oh(k)$ ways to do this. Next, we go through all tuples~$(b^{\ell}_{in}, b^{\ell}_{out}, b^r_{in}, b^r_{out})$ of relevant~$b_{in}$ and~$b_{out}$ bus stops for the two children. We call the set of these tuples~$valid(\cdot)$. In the following, we show that there are only~$\Oh(n)$ such tuples.

  	Without loss of generality, assume that~$b_{in}$ is in the subtree of~$\ell$; the case where it lies in the subtree of~$r$ is symmetric. We can now set~$b^\ell_{in} = b_{in}$ for all tuples since any bus stop that would be closer to~$\ell$ would also be closer to~$v$. For~$b^r_{in}$ we can choose any vertex that is in the subtree of~$r$ and is not closer to~$v$ than~$b_{in}$, meaning~$\pi(b^r_{in}, v, w_a) \geq \pi(b_{in}, v, w_a)$. Additionally, we can choose~$b^r_{in} = \bot$ for the case that there is no bus stop in the subtree of~$r$. Once we have fixed~$b^\ell_{in}$ and~$b^r_{in}$ we can uniquely determine~$b^\ell_{out}$ and~$b^r_{out}$ as follows:
  	\begin{align*}
    	b^{\ell}_{out} = \argmin_{b\in \{ b^r_{in}, b_{out} \}} \pi(\ell, b, w_a) &&
    	b^r_{out} = \argmin_{b\in \{ b^{\ell}_{in}, b_{out} \}} \pi(r, b, w_a)
  	\end{align*}
  	In total this shows that~$|valid(\cdot)| \in \Oh(n)$.

  Once the two child entries are fixed, both parts of the current entry can be computed in constant time. The agents starting at~$v$ walk either to~$b_{in}$ or to~$b_{out}$, so their contribution is obtained from the precomputed distances and~$|A_v|$.
  
  It remains to account for the two edges from~$v$ to its children. We denote this contribution by~$\delta(v,k^\ell,k^r,p^\ell,p^r)$. For the edge~$\{v,\ell\}$, the contribution is~$0$ if~$k^\ell=0$ or~$k^\ell=k$. Otherwise, the bus traverses this edge once if~$p^\ell=1$, and twice if~$p^\ell\in\{0,2\}$. The contribution of~$\{v,r\}$ is defined analogously, and~$\delta(v,k^\ell,k^r,p^\ell,p^r)$ is the sum of the two contributions.
  
  Overall, this gives a running time of~$\Oh(nk)$ for calculating each of the~$\Oh(n^3k)$ table entries. Combined with the time to precalculate all pairwise distances and the values~$|A_v|$, this gives a total running time of~$\Oh(n^4k^2+|A|)$.
\end{proof}

In contrast to Theorem~\ref{thm:f_energy_poly_on_consistent_trees} we show that allowing arbitrary agent weight functions makes the problem W[2]-hard even if we further restrict the input graphs to stars. For this we again use a reduction from \textsc{Hitting Set}.

Given a \textsc{Hitting Set} instance~\((\hg, k)\), we construct an \mbsp{} instance on a star~\(G\). Let~\(c \in V(G)\) be the center vertex of the star.
For each vertex~\(v \in V(\hg)\), we add a leaf vertex~\(x_v\) to~\(V(G)\).
Then we add a leaf vertex~\(x_e\) to~\(V(G)\) for each edge~\(e \in \he(\hg)\).
We set all bus-weights to zero.
For each~\(v \in V(G)\) we add a \emph{vertex-agent}~\(a_v\) with start and end vertex~\(x_v\).
These agents have an edge weight of 1 on the edge~\(\{ x_v, c\}\) and~\(0\) on all other edges.
For each~\( e\in \he(G) \) we add an \emph{edge-agent}~\(a_e\) with start and end vertex~\(x_e\).
These agents have weight~\(0\) on~\(\{x_e, c\}\) and~\(\{c, x_v\}\) for all~\(v \in e\) and weight~$1$ on all other edges. The parameter~$k$ is the same.

\begin{theorem}\label{thm:f_W2_on_arbitrarily_weighted_stars}
	\mbsp{} with objective function~\(f_{\text{energy}}\) or~\(f_{\text{time}}\) is W[2]-hard for parameter~\(k\) on stars with arbitrary agent weights.
\end{theorem}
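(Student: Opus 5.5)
We use the star construction described just before the statement. It is a parameterized reduction: it is polynomial and keeps the parameter~$k$. We aim to show that $(\hg,k)$ has a hitting set of size~$k$ if and only if the constructed instance has a bus route~$B$ with $k$ stops and objective value at most~$|V(\hg)|-k$. We prove this for both objectives simultaneously.

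The key observation is that both objectives coincide on this instance. All bus weights are zero, so $\pi_B(i,j)=0$ for all $i,j$ and the bus-cost term of $f_{\text{energy}}$ vanishes. Every agent has $s_a=t_a$, so its optimal choice is $i=j$ with the stop $B[i]$ nearest to $s_a$ under~$w_a$. Hence each agent contributes $2\min_i \pi(s_a,B[i],w_a)$ to either objective. For a star with $w_a\in\{0,1\}$, these distances are easy to read off.

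A vertex-agent $a_v$ has distance $0$ to a stop exactly when $x_v$ itself is a stop, and distance~$1$ otherwise, since every path leaves $x_v$ through the edge $\{x_v,c\}$. An edge-agent $a_e$ has distance $0$ to a stop exactly when a stop lies in $\{x_e,c\}\cup\{x_v : v\in e\}$, and distance at least~$1$ otherwise. Dividing the objective by~$2$, the cost of~$B$ is therefore the number of vertex-agents $a_v$ with $x_v\notin B$, plus the number of edge-agents not served at cost~$0$.

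For $(\Rightarrow)$, place the $k$ stops at $x_v$ for $v\in S$, where $S$ is the hitting set. Every edge-agent then costs~$0$, and exactly $|V(\hg)|-k$ vertex-agents cost~$1$, giving cost $|V(\hg)|-k$ (doubled in the objective).

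For $(\Leftarrow)$, the main step is an exchange argument that normalizes~$B$ so that all stops lie on vertex leaves~$x_v$.
\begin{itemize}
\item Any stop at~$c$, at some~$x_e$, or duplicated saves no vertex-agent.
\item Since at most $k$ vertex-agents can be saved, the vertex-agent cost is at least $|V(\hg)|-\#\{v: x_v\in B\}\ge |V(\hg)|-k$.
\item If the total cost is at most $|V(\hg)|-k$, then equality holds throughout. Hence exactly $k$ distinct stops sit at vertex leaves, there is no stop at~$c$ or at any~$x_e$, and every edge-agent costs~$0$.
\item An edge-agent $a_e$ costs $0$ only if some $x_v$ with $v\in e$ is a stop, because no stop lies at $c$ or $x_e$.
\end{itemize}
Thus $S=\{v: x_v\in B\}$ is a hitting set of size~$k$.

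The main obstacle is making this exchange step rigorous, in particular handling stops placed at $c$ or at some~$x_e$. Such a stop costs a vertex-agent saving, so the tight budget excludes it directly, without needing any relocation argument. If $|V(\hg)|<k$ we pad~$\hg$ trivially, so the hitting set can always be taken of size exactly~$k$.
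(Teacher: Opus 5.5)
Your proof is correct and follows the paper's argument: the same star construction, the observation that zero bus weights make $f_{\text{energy}}$ and $f_{\text{time}}$ coincide with each agent paying twice its distance to the nearest stop, and the same counting argument, which forces all $k$ stops onto distinct vertex leaves and every edge-agent to cost $0$ (you are slightly more explicit than the paper about duplicate stops and stops at $c$ or $x_e$). Two small slips: the bound on the objective itself is $2(\abs{V(\hg)}-k)$, not $\abs{V(\hg)}-k$ as your first paragraph states (you later acknowledge the doubling), and the padding remark is unnecessary, since for $\abs{V(\hg)}<k$ the bound is negative and both instances are no-instances.
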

\begin{proof}
	Since we set the bus weights to zero, the objective functions~\(f_{\text{energy}}\) and~\(f_{\text{time}}\) coincide, and we refer to them as~\(f\).
	We can prove the statement by showing the following claim:

	\emph{Claim}: There exists a hitting set of size~\(k\) in~\(\hg\) if and only if the \mbsp{} instance has a solution~$B$ with~\( f(B) \leq 2(n-k)\).

	\((\Rightarrow)\) Let~\(S\) be a hitting set of size~\(k\) in~\(\hg\).
	Then by placing a bus stop at each vertex~\(x_v\) for~\(v \in S\), we obtain a solution where we have~\(n-k\) vertex agents contributing a cost of~\(2\) to the objective function value each.
	By construction of the \mbsp{} instance on~\(G\), each edge-agent~\(a_e\) has weight~\(0\) on all walks from~\(x_e\) to a vertex~\(x_v\) for~\(v \in e\). By definition of~\(S\) for each~\(e \in \he{}\) we have~\(S \cap e \neq \emptyset\) and hence~\(a_e\) can reach a bus stop at~\(0\) cost from~\(x_e\).
	Subsequently, edge-agents do not contribute to the objective function value.

	\((\Leftarrow)\) Consider a solution to the \mbsp{} instance with objective value at most~\(2(n-k)\).
	Because each vertex-agent which is not at a bus stop contributes~\(2\) to the objective function value, each of the~\(k\) bus stops must be placed at a leaf vertex~\(x_v\) for some~\(v \in V(\hg)\).
	Since all other vertex-agents contribute~\(2\) to~\(f\), the edge-agents must contribute~\(0\).
	Thus, for all~\(e \in \he\) there must be a weight-zero walk from~\(x_e\) to a bus stop~\(x_v\) for some vertex~\(v \in e\).
	Therefore,~\(S \coloneqq \{v \in V(\hg) \mid x_v \text{ is bus stop}\}\) is a hitting set of size~\(k\).
\end{proof}

Next, we look at the cases covered by Theorems~\ref{thm:f_energy_poly_on_consistent_trees} and~\ref{thm:f_W2_on_arbitrarily_weighted_stars} for the~$g_{\text{energy}}$ function. Surprisingly, allowing agents to directly walk to their destination makes all of these cases W[1]-hard. We show this with a reduction from \textsc{Clique} which is W[1]-hard for~$k$~\cite{DF13}.

\begin{theorem}\label{thm:g_energy_hard_on_unweighted_stars}
	\mbsp{} with objective function~$g_{\text{energy}}$ is W[1]-hard on stars in the unweighted case.
\end{theorem}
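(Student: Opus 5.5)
The plan is a parameterized reduction from \textsc{Clique}, where the bus stops are meant to mark the vertices of a clique. Let $(H,k)$ be a \textsc{Clique} instance. Since \textsc{Clique} stays W[1]-hard on $d$-regular graphs, I would assume $H$ is $d$-regular; uniform degrees make the bookkeeping below independent of which vertices are chosen.

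\textbf{Construction.} Build a star with center $c$ and one leaf $x_v$ for every $v\in V(H)$. For every edge $\{u,v\}\in E(H)$, add $M$ identical agents travelling from $x_u$ to $x_v$. Here $M$ is a polynomially bounded multiplicity, and I may also add further gadget leaves and agents. All weights are $1$ and the number of stops is $k'=k$ (possibly plus a constant).

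On a star the distances are simple: a leaf is at distance $1$ from the center and $2$ from any other leaf. So each agent's direct walk costs $2$, which is the baseline for $g_\text{energy}$. An edge agent on $\{u,v\}$ pays $0$ exactly when both $x_u$ and $x_v$ are stops. If only $x_u$ is a stop, it pays $2$, so it saves nothing, unless $c$ is also a stop, in which case it pays $1$. Any route visiting $k$ distinct leaves costs exactly $2(k-1)$ for the bus, whatever the order. Hence, among solutions using only distinct leaves, the objective is
\[
2(k-1)+2M\bigl(|E(H)|-e(S)\bigr),
\]
where $S$ is the set of chosen vertices and $e(S)$ the number of edges inside $S$. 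This value reaches the threshold $2(k-1)+2M(|E(H)|-\binom{k}{2})$ exactly when $S$ is a $k$-clique. The forward direction is therefore immediate.

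\textbf{Backward direction.} I would normalize an optimal route in the following order, each step an exchange argument.
\begin{enumerate}
\item Replace stops on gadget leaves, or repeated stops, by stops on fresh leaves $x_v$ without increasing the cost.
\item Rule out a stop at the center $c$.
\item Read off a $k$-clique from the resulting stop set by the counting identity above.
\end{enumerate}

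\textbf{Main obstacle: the center stop.} Step 2 is where I expect the difficulty. A stop at $c$ lets every edge agent with one endpoint at a stop save $1$. With $k-1$ leaf stops plus the center, this saves $M\bigl(d(k-1)-2e(S)\bigr)+2Me(S)=Md(k-1)$. The bus cost also does not drop, because inserting $c$ between two leaves keeps each leg at length $2$. If $d\ge k$, this beats the clique value $Mk(k-1)$, so the naive construction fails.

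My first attempt to fix this would be to reduce from \textsc{Multicolored Clique} instead, with colour classes $V_1,\dots,V_k$. Edge agents would run only between different classes. I would then add a heavy ``anchor'' gadget that punishes omitting a class, so that exactly one stop per colour class is forced. For example, for each class $i$ I would add a private leaf $y_i$ and many agents from $y_i$ to every $x_v$, $v\in V_i$, with multiplicities balanced so that the gain from hitting a class dominates any gain from $c$. Since the $g_\text{energy}$ baseline of $2$ per agent is uniform, every saving is an integer count. The remaining work is to choose the multiplicities so that:
\begin{itemize}
\item the class-covering gadget forces one leaf stop per class and no center stop, taking $k'$ slightly larger than $k$ if the gadget itself needs stops;
\item among such configurations, the only remaining difference is $e(S)$.
\end{itemize}
Verifying that no mixed configuration, such as center plus partial coverage, beats the clique threshold is the delicate inequality chase I expect to dominate the proof.
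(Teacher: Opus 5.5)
Your forward direction and the counting identity for routes through distinct leaves are fine. The backward direction, however, has a genuine gap exactly at the step you flag: nothing in the proposal excludes a stop at the center $c$, and the repair you sketch does not work.

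Under $g_{\text{energy}}$ on a star, a center stop costs only the slot it occupies; a route through $c$ and $k-1$ leaves is even strictly shorter than one through $k$ leaves. In return, it lets every agent with exactly one stopped endpoint save~$1$. Your anchor gadget suffers from the same effect:
\begin{itemize}
\item If $y_i$ is not a stop, an agent from $y_i$ to $x_v$ can beat its walking cost~$2$ only via $c$ (walk to $c$, then ride to $x_v$).
\item If $y_i$ is a stop, a center stop lowers the cost of every anchor agent heading to an unselected $x_v$ from $2$ to $1$. A leaf stop $x_v$, by contrast, helps only the agents ending at $x_v$.
\end{itemize}
With $L$ agents per pair $(y_i,x_v)$ and all $y_i$ stopped, trading one selected leaf for $c$ lowers the anchor cost by $L\bigl(\sum_i\abs{V_i}-k-1\bigr)$. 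So the gadget rewards the center. The loss falls on a single group while the gain is spread over all the others, so no multiplicities that treat the vertices of a class alike can reverse this. As written, step~2 is unproved, and for the proposed gadget it is false. The deferred ``inequality chase'' is really the whole proof.

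The missing idea is to make each selected vertex cost two leaves, so that with a center stop every leaf stop is worth only half a pair. The paper's construction is as follows:
\begin{itemize}
\item each vertex $v$ gets two leaves $s_v,t_v$, with $M>4k+2\abs{E(H)}$ agents from $s_v$ to $t_v$;
\item the edge agent for $\{u,v\}$ travels from $s_u$ to $s_v$;
\item $k'=2k$ stops are allowed.
\end{itemize}
Without a center stop, the heavy agents of $v$ save only if both $s_v$ and $t_v$ are stops, and then they save $2M$. With a center stop, each leaf stop saves them exactly $M$. Hence $c$ plus $2k-1$ leaf stops save at most $M(2k-1)$, against $2Mk$ for $k$ full pairs. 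This loss of $M$ exceeds the entire remaining budget $4k-2+2\abs{E(H)}-2\binom{k}{2}$ for the bus and the edge agents, which excludes $c$. It also forces exactly $k$ fully selected vertices and fixes the bus cost at $4k-2$. The edge agents are then left a slack of $2\abs{E(H)}-2\binom{k}{2}$, which only a clique attains. Neither regularity nor \textsc{Multicolored Clique} is needed.
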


\begin{proof}
	We use a reduction from the \textsc{Clique} problem, where we are given a graph~$G$ and an integer~$k$, and we want to know if~$G$ contains a clique of size~$k$. Our reduction works as follows:
	
	Let~$(G,k)$ be an instance of \textsc{Clique}. We create a new graph~$T$.
	First, we add a center vertex~$c$. For each vertex~$v\in V(G)$ we add the two vertices~$s_v$ and~$t_v$ to~$T$. Since we are creating a star graph, all of these vertices are directly connected to the center vertex~$c$. 
	
	Let~$M > 4k + 2|E(G)|$. For each vertex~$v\in V(G)$ we add~$M$ \emph{vertex agents} with start vertex~$s_v$ and end vertex~$t_v$. We call the set of these agents~$A_v$. For each edge~$e = \{ u, v \}\in E(G)$ we add one \emph{edge agent}~$a_e$ with start vertex~$s_u$ and end vertex~$s_v$. Finally, we set the number of allowed bus stops to~$k' = 2k$. This gives us the \mbsp{} instance~$(T, A, k')$. Note, that we can ignore the weight functions since we are in the unweighted case.
	
	We now show that~$(G,k)$ is a yes-instance if and only if~$(T, A, k')$ has a bus route~$B$ with~$g_{\text{energy}}(B) \leq 4k - 2 + 2|E(G)| - 2\binom{k}{2} + 2M\cdot (|V(G)| - k)$.
	
	($\Rightarrow$) Let~$C$ be the vertex set of a clique in~$G$ with~$|C| = k$. For each vertex~$v\in C$ we add a bus stop at~$s_v$ and a bus stop at~$t_v$. We put all bus stops into an arbitrary order to create a bus route~$B$. The bus route itself contributes~$4k - 2$ to the value of the objective function. This is because we have to first drive from the first bus stop to the center vertex. Then we have to visit the next~$2k - 2$ bus stops by driving to them and then back to the center vertex. Finally, we drive from the center vertex to the last bus stop.
	
	For each vertex~$v\in V(G)$ the agents~$A_v$ contribute~$2M$ to the objective function if~$v \notin C$ since each agent needs to walk two edges from~$s_v$ to~$t_v$. Otherwise, the agents can take the bus and have a contribution of zero. This gives us a total contribution of~$2M\cdot (|V(G)| - k)$. For each edge~$e = \{u, v\}\in E(G)$ the agent~$a_e$ contributes~$2$ to the objective function if at most one of~$u$ and~$v$ is a bus stop, and it contributes zero if both vertices are bus stops. Since we selected the vertices of a clique of size~$k$, these agents have a total contribution of~$2|E(G)| - 2\binom{k}{2}$.
	
	($\Leftarrow$) Let~$B$ be a bus route with~$g_{\text{energy}}(B) \leq 4k - 2 + 2|E(G)| - 2\binom{k}{2} + 2M\cdot (|V(G)| - k)$. We start by showing that the center vertex~$c$ cannot be a bus stop. For this, let us assume that the center vertex is a bus stop. We will look at how much the vertex agents contribute to the objective function under this assumption. Let~$v\in V(G)$. If neither~$s_v$ nor~$t_v$ are bus stops, then the agents in~$A_v$ have a total contribution of~$2M$. If exactly one of these vertices is a bus stop then the agents have a total contribution of~$M$. And if both are bus stops, then the contribution is zero. This means every non-center vertex in~$T$ that becomes a bus stop reduces the contribution of the vertex agents by~$M$. Since the center vertex is already a bus stop, we can only select at most~$2k - 1$ of these vertices for a total contribution of~$2M\cdot |V(G)| - M\cdot (2k - 1) = M + 2M\cdot (|V(G)| - k)$. By the definition of~$M$ this is bigger than~$4k - 2 + 2|E(G)| - 2\binom{k}{2} + 2M\cdot (|V(G)| - k)$. Hence, the center vertex cannot be a bus stop.
	
	Next, we show that there must be exactly~$k$ vertices in~$V(G)$ for which both~$s_v$ and~$t_v$ are bus stops. For this, we again look at the contribution of the vertex agents. Let~$v\in V(G)$. If at most one of~$s_v$ and~$t_v$ is a bus stop, then the agents in~$A_v$ contribute~$2M$ to the objective function since the center vertex is not a bus stop, meaning they have to always walk at least two edges. If both vertices are bus stops, then the contribution is zero. This means, the only way to reduce the contribution of the vertex agents is to select both their starting point and their end point. We can do this at most~$k$ times which gives us a total contribution of~$2M\cdot (|V(G)| - k)$. Doing this only~$k-1$ times would lead to a contribution that is bigger than~$4k - 2 + 2|E(G)| - 2\binom{k}{2} + 2M\cdot (|V(G)| - k)$. Hence, there must be exactly~$k$ vertices in~$V(G)$ for which both~$s_v$ and~$t_v$ are bus stops. We will call these~$k$ vertices the vertices that were selected by~$B$.
	
	The only thing left to show is that the vertices selected by~$B$ must form a clique in~$G$. For this, notice that the contribution of the bus route and the vertex agents is now fixed to~$4k - 2$ and~$2M\cdot (|V(G)| - k)$, respectively. This means that the remaining edge agents must have a contribution of at most~$2|E(G)| - 2\binom{k}{2}$. As we explained in the~$(\Rightarrow)$ part of the proof, the contribution of an edge agent~$a_e$ can only be reduced if both end points of~$e$ are selected by the bus route. This means the minimum contribution is reached if the bus route selects a clique. We already saw in the~$(\Rightarrow)$ part that this leads to a contribution of~$2|E(G)| - 2\binom{k}{2}$ which shows that the bus route must select a clique of size~$k$ in~$G$.
\end{proof}

Finally, we restrict the input graphs to paths and show that~\mbsp{} can be solved efficiently for both energy objectives, even with arbitrary agent weights. As in Theorem~\ref{thm:f_energy_poly_on_consistent_trees}, our approach is based on dynamic programming. The idea is to fix the rightmost bus stop so far and consider all possibilities for the next bus stop. This determines how agents traverse the edges between the two stops.

\begin{theorem}\label{thm:energy_poly_on_arbitrary_paths}
	\mbsp{} with objective function~$f_{\text{energy}}$ or~$g_{\text{energy}}$ can be solved in~$\Oh(n^3k\cdot |A|)$ time on paths with arbitrary weights.
\end{theorem}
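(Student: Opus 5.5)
We prove the statement with a left-to-right dynamic program over the path. Let the path be $v_1,\dots,v_n$. The structural observation is that on a path the bus walk visiting the stops in order has weight determined only by the sequence of stop positions: $\pi_B(1,k)=\sum_{i<k}\pi(B[i],B[i+1],w_b)$. Each agent's cost depends only on the \emph{set} of stop positions, not their order, since boarding and leaving indices are chosen freely. For $f_{\text{energy}}$ the agent pays $\min_{u\in S}\pi(s_a,u,w_a)+\min_{u\in S}\pi(u,t_a,w_a)$, where $S$ is the set of stop vertices. For $g_{\text{energy}}$ this value is further capped by $\pi(s_a,t_a,w_a)$. For a fixed set $S$, the cheapest route visiting $S$ traverses it monotonically from its leftmost to its rightmost vertex, with cost $\pi(\min S,\max S,w_b)$. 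Any surplus of the $k$ stops (when $|S|<k$) can be repeated at no bus cost. So it suffices to choose a set $S$ of at most $k$ vertices minimizing $\pi(\min S,\max S,w_b)+\sum_a c_a(S)$, and we may assume the stops are sorted left to right.

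Because each agent's nearest stop from a vertex $x$ is either the closest stop to the left or to the right of $x$, the agent costs decompose over \emph{gaps} between consecutive chosen stops. Precompute, for every pair $i<j$ (including the sentinels ``no stop to the left'' and ``no stop to the right''), the value $\mathrm{gap}(i,j)$. This is the total walking cost contributed by the endpoints $s_a$ and $t_a$ lying strictly between $v_i$ and $v_j$ (or at $v_j$), given that $v_i$ and $v_j$ are consecutive stops. Each endpoint $x$ in that interval contributes $\min\{\pi(x,v_i,w_a),\pi(x,v_j,w_a)\}$, computed with prefix sums of $w_a$ in $\Oh(1)$ per agent after $\Oh(n\cdot|A|)$ preprocessing. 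Done naively over all pairs and agents, this costs $\Oh(n^2|A|)$. The DP is then $D[j,k']$, the minimum cost with $k'$ stops whose rightmost stop is $v_j$, with transition $D[j,k']=\min_{i<j}\bigl(D[i,k'-1]+\pi(v_i,v_j,w_b)+\mathrm{gap}(i,j)\bigr)$ plus initial and final sentinel gaps. This already gives $f_{\text{energy}}$ well within $\Oh(n^3k|A|)$.

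The main obstacle is $g_{\text{energy}}$. There each agent's cost is $\min\{\pi(s_a,t_a,w_a),\,d_S(s_a)+d_S(t_a)\}$, which no longer splits into independent contributions of $s_a$ and $t_a$ across gaps. My plan is to charge each agent at the DP step where its \emph{later} endpoint's gap is decided. Order the endpoints of $a$ as $x_a\le y_a$. Then $d_S(x_a)$ depends only on the stops adjacent to $x_a$. The stop left of $y_a$ is $v_i$ and the stop right of $y_a$ is $v_j$. If $x_a$ lies in the same gap, the whole cost is determined by $(i,j)$. If $x_a$ lies left of $v_i$, we need $d_S(x_a)$, which depends on earlier choices. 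To handle this I would enlarge the state to $D[h,i,k']$, remembering the last two stops $v_h,v_i$. I would also argue that for an agent with endpoints in different gaps, the cost of reaching the relevant stops is determined by the nearest stops around each endpoint. I would try to make the recurrence exact by showing that the relevant distances can be read off from the stops bounding the gap of each endpoint when the transition crosses $y_a$. If that fails, I would instead guess, for each transition, the stop used to board (the nearest stop to $x_a$ via the earlier state). Either way, this gives $\Oh(n^2k)$ states and $\Oh(n)$ transitions, each evaluated in $\Oh(|A|)$ time. Verifying that this bookkeeping captures $d_S(x_a)$ correctly, and fits the claimed $\Oh(n^3k|A|)$ bound, is the step I expect to need the most care.
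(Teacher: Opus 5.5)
\textbf{Verdict: the proof has a gap.} Your $f_{\text{energy}}$ part is sound and is essentially the paper's dynamic program. Your $D[j,k']$ with $\mathrm{gap}(i,j)$ is the paper's $D[k',v_j]$ with $f_A(v_i,v_j)$, and your reduction to a stop \emph{set} of size at most $k$ is, if anything, slightly cleaner. The gap is that $g_{\text{energy}}$ is left unresolved, and the fixes you sketch do not work as stated. Remembering the last two stops $v_h,v_i$ does not determine $d_S(x_a)$ when $x_a$ lies several gaps to the left of $v_i$. ``Guessing the boarding stop'' per agent cannot be encoded in a polynomial-size state. So as written, the $g_{\text{energy}}$ half of the theorem has no proof. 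Your remark that for endpoints in different gaps the cost is determined by the nearest stops points the right way. What is actually needed is the stronger fact that for such agents the direct-walk option can be dropped entirely.

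\textbf{The missing idea is a one-line domination argument on paths.} If some stop $u$ lies on the subpath between $s_a$ and $t_a$ (endpoints included), then $d_S(s_a)+d_S(t_a)\le \pi(s_a,u,w_a)+\pi(u,t_a,w_a)=\pi(s_a,t_a,w_a)$. Hence walking directly is never strictly better. The $\min$ with $\pi(s_a,t_a,w_a)$ therefore matters only for agents whose two endpoints lie in the same gap, counting the end regions left of the first stop and right of the last stop as gaps.
\begin{itemize}
\item In an inner gap $(v_i,v_j)$, such an agent's cost is $\min\{\pi(s_a,t_a,w_a),\,d(s_a)+d(t_a)\}$ with nearest-stop distances taken over $\{v_i,v_j\}$. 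This is a function of $(i,j)$ alone.
\item In an end region the cost is simply $\pi(s_a,t_a,w_a)$.
\item Every other agent pays exactly $d_S(s_a)+d_S(t_a)$, which splits over gaps exactly as in your $f_{\text{energy}}$ argument.
\end{itemize}
So $g_{\text{energy}}$ is handled by the same two-index DP with a modified $\mathrm{gap}(i,j)$; no enlarged state is needed, and the $\Oh(n^3k\cdot|A|)$ bound follows directly. This is exactly how the paper's four cases work. Only Case~1 (both endpoints between $v_i$ and $v_j$) and the boundary terms include the direct-walk alternative. Cases~2--4 implicitly use that the bus option dominates once a stop separates the two endpoints.
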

\begin{proof}
We first describe the algorithm for~$g_{\text{energy}}$ and then show how it can be modified to work for~$f_{\text{energy}}$.
Let~$(P, A, w_b, k)$ be the input. The graph~$P = (V,E)$ is a path on~$n$ vertices~$(v_1, \hdots, v_n)$.
We say that~$v_i$ is to the left of~$v_j$ and that~$v_j$ is to the right of~$v_i$ if~$i < j$.
We assume that each agent in~$A$ has start vertex~$v_\ell$ and end vertex~$v_r$ with~$\ell \leq r$.

We define a dynamic programming based algorithm to solve \mbsp{} for this input.
A table entry~$D[k',v_j]$ assumes that~$v_j$ is a bus stop and that there are exactly~$k'$ bus stops to the left of~$v_j$. The entry contains the objective function value of an optimal bus route that fulfills these conditions while only counting the contribution of edges that are to the left of~$v_j$.
In the base cases with~$k' = 0$ the bus route contributes~$0$ to the objective function. This means we only need to count the contribution of the agents. If an agent starts and ends to the left of~$v_j$, then it is optimal for that agent to walk directly to their endpoint. Else if the agent only starts to the left of~$v_j$, then that agent will always walk to~$v_j$. If the agent does not start to the left of~$v_j$, then the agent has a contribution of~$0$ since it does not make sense for the agent to walk to the left of~$v_j$:
\begin{equation*}
	D[0, v_j] = \sum_{(v_\ell, v_r, w)\in A}
    \begin{cases}
      \pi(v_\ell, v_r, w) & \text{if } r < j,           \\
      \pi(v_\ell, v_j, w) & \text{if } \ell < j \leq r, \\
      0                   & \text{otherwise}.
    \end{cases}
\end{equation*}
If~$k' > 0$, then we need to check all possible positions~$v_i$ of the closest bus stop to the left of~$v_j$. For each of those we need to add up the previous table entry~$D[k' - 1, v_i]$, the contribution of the bus between~$v_i$ and~$v_j$, and the contribution of the agents between~$v_i$ and~$v_j$. This can be expressed as follows, where~$f_A(v_i, v_j)$ is the agent contribution between~$v_i$ and~$v_j$.
\begin{equation*}
	\min_{i\in [j - 1]}
    \left( D[k' - 1, v_i] + \pi(v_i, v_j, w_b) + f_A(v_i, v_j) \right)
\end{equation*}

To calculate~$f_A(v_i, v_j)$ we need to sum up the contribution of each individual agent. For this we consider four different types of agents~$(v_\ell, v_r, w)\in A$.

\smallskip
\noindent
\emph{Case~$1$:~$i \leq \ell \leq r \leq j$.} In this case the agent starts and ends between~$v_i$ and~$v_j$. Here the agent either directly walks to their end point or they walk from their start point to~$v_i$, use the bus to get to~$v_j$, and then walk from~$v_j$ to their end point. The contribution of such an agent is~$\min(\pi(v_\ell, v_r, w), \pi(v_i, v_\ell, w) + \pi(v_r, v_j, w))$.

\smallskip
\noindent
\emph{Case~$2$:~$\ell < i \leq r \leq j$.} In this case only the end point is between~$v_i$ and~$v_j$. Here we can assume that the agent sits in the bus at~$v_i$, meaning they either walk from~$v_i$ to their end point or they use the bus to get to~$v_j$ and then walk from there. The contribution of such an agent is~$\min(\pi(v_i, v_r, w), \pi(v_r, v_j, w))$.

\smallskip
\noindent
\emph{Case~$3$:~$i \leq \ell \leq j < r$.} In this case only the start point is between~$v_i$ and~$v_j$. Analogously to Case~$2$, the agent either walks to~$v_i$ or~$v_j$. The contribution of such an agent is~$\min(\pi(v_i, v_\ell, w), \pi(v_\ell, v_j, w))$.

\smallskip
\noindent
\emph{Case~$4$:~$\ell < i \leq j < r$.} In this case neither start nor end point are between~$v_i$ and~$v_j$. Here we can assume that the agent uses the bus to get from~$v_i$ to~$v_j$, meaning the contribution is~$0$.

Finally, to find the globally optimal objective function value we need to consider each vertex as the rightmost bus stop. Similar to the case for~$k' = 0$ we also need to deal with agents that have their end point to the right of the rightmost bus stop. Overall we can calculate the optimal objective function value as follows:
\begin{equation*}
	\min\limits_{j\in [n]} \left( D[k,v_j] + \sum\limits_{(v_\ell, v_r, w)\in A}
    \begin{cases}
      \pi(v_\ell, v_r, w) & \text{if } j < \ell,        \\
      \pi(v_j, v_r, w)    & \text{if } \ell \leq j < r, \\
      0                   & \text{otherwise}.
    \end{cases}
    \right)
\end{equation*}

The table has size~$\Oh(nk)$, calculating the table entries for~$k' = 0$ takes~$\Oh(n \cdot |A|)$ time because of the shortest path calculation for each agent, and calculating the table entries for~$k' > 0$ takes~$\Oh(n^2 \cdot |A|)$ time because of the shortest path calculation for each agent and each next bus stop. In total this gives us a running time of~$\Oh(n^3k\cdot |A|)$.

For the algorithm to work for the~$f_{\text{energy}}$ objective function we only need to change two things. First, agents that start and end to the left of the leftmost bus stop or to the right of the rightmost bus stop can no longer walk directly to their end point and must instead walk to the leftmost or rightmost bus stop. Second, in the calculation of~$f_A(v_i, v_j)$ agents that have their start and end point between~$v_i$ and~$v_j$ can no longer walk directly to their end points. Instead, they have the option of either not using the bus and walking to one of the bus stops and back again, or they can walk to~$v_i$, use the bus to get to~$v_j$, and walk to their end point from there. None of these changes affect the asymptotic running time bound.
\end{proof}

\section{Minimizing Time on Trees}
\label{sec:time}

We now consider the time objectives on trees. We first show that~$f_{\text{time}}$ and~$g_{\text{time}}$ are W[1]-hard on stars, even with unit agent weights.
The key difficulty is that, for the time objectives, the effect of selecting one bus stop depends on which other stops are selected. We exploit this interaction to prove W[1]-hardness for this variant of \mbsp{}.

\begin{theorem}\label{thm:time_hard_on_unit_stars}
	\mbsp{} with objective function~$f_{\text{time}}$ or~$g_{\text{time}}$ is W[1]-hard on stars with unit agent weights.
\end{theorem}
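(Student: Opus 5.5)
I would prove both claims with one parameterized reduction from \textsc{Clique}, and I would take it from the $k$-regular version, which remains W[1]-hard. In that version every $k$-vertex set $S$ has the same degree sum $\sum_{v\in S}\deg(v)=k\cdot d$, so a cost that is linear in the selected vertices cannot tell a clique from any other $k$-set. The reduction therefore has to create a real pairwise interaction between two selected stops.

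\textbf{Construction.} I would build the star with centre $c$ and a leaf $s_v$ for every $v\in V(G)$. All agent weights are $1$ (unit). For each edge $e=\{u,v\}$ I would add one edge agent from $s_u$ to $s_v$, and I would set $k'=k$. Bus weights are then the only free design choice, and the pairwise interaction can only come from the term $\pi_B(i,j)$ of $f_{\text{time}}$, since the agent weights are fixed. My first attempt would put a small bus weight $\varepsilon$ on every star edge, with padding agents that force $c$ never to be a stop.

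\textbf{Why that first attempt fails, and the fix to try next.} Write $S$ for the set of leaves chosen as stops. An edge agent then pays about $0$ if both endpoints are stops, about $2$ if exactly one is, and $4$ if none is. This is again linear in $|e\cap S|$, so by $k$-regularity it is useless. The fix is to make the exactly-one case cost as much as the none case, so that only pairs with both endpoints in $S$ earn a discount. To do this I would replace the single agent per edge by several agents with the same endpoints but different weight scales. This is allowed because ``unit'' only fixes the scaling up to the constant $c$; if that reading is disallowed, I would use different bus weights on the leaf edges instead. With a suitable choice, the linear parts cancel and only the both-stops bonus survives.

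\textbf{Second obstacle: dependence on the route order.} For both-stop pairs the ride cost is $2\varepsilon$ times the distance between the two positions in the route $B$, so it depends on the order. I would handle this in one of two ways. First, I could note that for a clique every pair is an edge, so the total ride cost is $\varepsilon\sum_{i<j}2|i-j|$ whatever the order. Second, I could take $\varepsilon$ so small that all ride costs add up to less than $1$ and only the integer part of the cost carries information. The threshold would then be ``(linear constant) $-\binom{k}{2}\cdot(\text{bonus}) + O(\varepsilon)$''. A route meets it if and only if at least $\binom{k}{2}$ edge agents find both endpoints among the stops, which holds exactly when the $k$ selected leaves form a clique.

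\textbf{Transfer and verification.} For $g_{\text{time}}$, I would check that direct walking (cost $2$ for an edge agent) never beats the bus in the relevant cases, or else adjust the threshold by the constant it introduces. I also need to confirm that an optimal route never places a stop at $c$ or a repeated stop; this is the role of the padding agents. The main obstacle is the exactly-one-versus-none cancellation: I expect that getting the agents' cost profile to be genuinely nonlinear under unit agent weights is where the construction must be tuned, most likely via the bus weights on individual leaf edges.
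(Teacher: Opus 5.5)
Your plan has a genuine gap at the step you flag as the main obstacle, and it cannot be closed by the means you propose.

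First, in this paper ``unit'' is a special case of ``consistent''. All agents share one weight function that is constant on every edge; the paper's own reduction uses $w_a\equiv\alpha+1$ for every agent. So copies of an edge agent with different weight scales are not admissible.

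Second, and more decisively, scaling would not help even if it were allowed. Take an agent from $s_u$ to $s_v$ with scale $\lambda$, and assume no stop at $c$. Its $f_{\text{time}}$ cost is
\begin{itemize}
\item $4\lambda$ if neither endpoint is a stop,
\item $2\lambda$ if exactly one is,
\item $\min\{2\lambda,\pi_B(i,j)\}$ if both are.
\end{itemize}
The first two values are pure walking distances in the star. Bus weights on leaf edges (your fallback) therefore cannot change them. Every nonnegative combination of such agents still satisfies ``none $=2\cdot$ one'', so the linear part never cancels. The only non-linear term is the both-stops value $\min\{2\lambda,\pi_B(i,j)\}\ge 0$. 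It lies above the linear extrapolation $0$, so it penalizes pairs inside the stop set and never rewards them. Under $f_{\text{time}}$ these gadgets can only make cliques look worse, so a reduction from \textsc{Clique} built on them fails. The structure instead supports \textsc{Independent Set}, which is what the paper reduces from.

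For $g_{\text{time}}$ the situation is different. The direct-walking option caps the none-case at $2\lambda$, which gives you ``one $=$ none'' for free. You treat direct walking as a nuisance to be checked rather than as the mechanism. In any case it does nothing for $f_{\text{time}}$, so your single reduction for both objectives does not go through.

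The second unresolved piece is the padding that keeps $c$ out of the route, which you never specify. This is not routine. Agent costs only decrease as stops are added, and a stop at $c$ changes the accounting sharply:
\begin{itemize}
\item every leaf-to-leaf agent with no endpoint at a stop drops from $4$ to $2$ under $f_{\text{time}}$;
\item every agent with exactly one endpoint at a stop drops to $1$ plus a ride along the route.
\end{itemize}
The padding must make leaf stops worth more than all of this without disturbing the clique accounting.

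The paper avoids the problem by going the other way:
\begin{itemize}
\item It makes $c$ itself a stop.
\item It sets $w_b\equiv\alpha$ and $w_a\equiv\alpha+1$, so every edge ridden saves exactly $1$ and detours never pay off.
\item It uses $\beta$-fold neighbor agents (from $c$ to each leaf) and $\gamma$-fold alternation agents to force the route shape $X_{x_1},c,X_1,X_{x_2},c,X_2,\dots$.
\item It uses edge and non-edge agents whose one-edge savings are double counted exactly when two selected vertices are adjacent. This is the same ``penalty for adjacent selected pairs'' effect identified above.
\end{itemize}
Because $c$ is a stop on every agent's shortest path, the same argument also covers $g_{\text{time}}$.
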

\begin{proof}
We use a reduction from \textsc{Independent Set} to \mbsp{}.
Let~$(G=(V,E),k)$ be an instance of \textsc{Independent Set} where~$\abs{V} = n > \max\{4, k\}$.

The following proofs use the variables~$\alpha$,~$\beta$ and~$\gamma$. These variables are chosen such that they are separated by orders of magnitude.
We define~$\alpha, \beta$ and~$\gamma$ as follows:
\begin{align*}
	\gamma &{}= 4 \cdot n^2, \ \ \ & \beta &{}= 3n^3 \gamma, \ \ \ & \alpha &{}= 2n \cdot \beta + 2 n^2 \cdot \gamma + 4 n^2.
\end{align*}
We construct an instance~$(G',A^*,w_b,w_a,3k)$ of \mbsp{} as follows:
\begin{align*}
	V(G') ={}& \{c\} \cup V_1 \cup V_2 \\
	V_1 ={}& \{X_v \mid v \in V\} \\
	V_2 ={}& \{X_i \mid i \in [k]\} \\
	E(G') ={}& \{\{c, v'\}, \mid v' \in V(G') \setminus \{c\}\} \\
	w_b(e) ={}& \alpha, \ \ \ w_a(e)=\alpha+1 \text{ for } e \in E(G') \\
    A^* ={}& A_N \cup A_A \cup A_E \\
    A_N ={}& \{a_{v',i} = (c,v', w_a) \mid v' \in V(G') \setminus \{c\},i \in [\beta]\} \\
    A_A ={}& \{a_{v,j,i} = (X_v, X_j, w_a) \mid v \in V, j \in [k], i \in [\gamma]\} \\
    A_E ={}& \{a_{u,v} = (X_v, X_u, w_a) \mid \{u,v\} \in E\} \\
	&{}\cup \{a_{u,v,i} = (X_v,c,w_a) \mid \{u,v\} \notin E, u \neq v, \\ 
    &{} u,v \in V, i \in [2]\} 
\end{align*}

Thus,~$G'$ is a star where~$c$ is the center vertex which is connected to a vertex~$X_v \in V_1$ for~$v \in V$ and a vertex~$X_{i} \in V_2$ for each~$i \in [k]$.

The agents in $A_E$ represent the edges in $G$. The instance is constructed such that any bus route minimizing the objective function will have a specific structure. Any such route will be made up of~$k$ bus stops in~$V_1$,~$k$ bus stops in~$c$ and~$k$ bus stops in~$V_2$, always alternating between them in the same order. The only vertex that should be visited multiple times is~$c$ while every single one of the~$k$ vertices in~$V_2$ should be visited exactly once. If an independent set exists, the $k$ vertices $X_v \in V_1$ in the route represent $k$ vertices in $G$ which form such an independent set.

As we show in \Cref{agent_shortest_path}, the agent and bus weights are so similar, that it never pays off to travel more edges than necessary.
For a bus route~$B = (b_1, \dots , b_{3k})$, we say bus stops~$b_i,b_j \in B$ are \emph{neighboring} if the distance between $b_i$ and $b_j$ in $G'$ is $1$ and~$\abs{i-j} = 1$.
For two pairs~$b_i,b_j \in B$ and~$b_l,b_m \in B$, we say they are \emph{distinct} if neither~$b_i = b_l$ and~$b_j = b_m$ nor~$b_i = b_m$ and~$b_j = b_l$.
The set~$A_N$ is constructed to maximize the number of neighboring bus stops. We call the agents in $A_N$ \emph{neighbor agents}. We call the agents in $A_A$ \emph{alternation agents} as the set~$A_A$ is constructed to ensure that an optimal bus route alternates between vertices in~$V_1$ and~$V_2$. The set~$A_E$ is constructed to ensure the vertices of \(V_1\) in the route represent an independent set in \(G\).

The total numbers of agents by set are $\abs{A_N} = (n + k) \cdot \beta, \abs{A_A} = n \cdot k \cdot \gamma$ and $\abs{A_E} = 4 \cdot \binom{n}{2} - 2 \cdot \abs{E}$. The distance by agent is~$1$ for $A_N$ and~$2$ for $A_A$ while~$A_E$ contains agents with paths of length~$1$ and $2$. This results in the sum of the distance between the endpoints for each agent in~$A^*$ being $\Delta = (n + k) \cdot \beta + 2 n \cdot k \cdot \gamma + 4 \cdot \binom{n}{2}$. The total distance agents travel by bus in the desired solution is $\Theta = 2k \cdot \beta + (2k - 1 + n \cdot k) \cdot \gamma + k \cdot 2(n-1)$.

\begin{lemma}
	The independent set instance has a solution if and only if the objective function~$f_{\text{time}}$ for this instance is at most~$\Sigma = \alpha \cdot \Delta + \Delta - \Theta$.
    \label{objective_function}
\end{lemma}

We call a bus route~$B$ which fulfills~$f_{\text{time}} \leq \Sigma$ a \emph{certifying solution}.

To prove \Cref{objective_function}, we will first lay some groundwork.

\begin{lemma}
	In every certifying solution the total number of edges travelled by agents is $\Delta$.
	\label{agent_shortest_path}
\end{lemma}

\begin{proof}
\renewcommand{\qedsymbol}{$\diamond$}
	The sum of the distance between the endpoints for all agents is~$\Delta$ thus $\Delta$ is the minimum amount of edges the agents can travel in any solution.Since \(\alpha > \Delta\) it holds, that \(\Sigma < \alpha \cdot (\Delta+1)\) and more than $\Delta$ edges being travelled by agents in~$B$ would result in~$f_{\text{time}}(B) \geq (\Delta+1) \cdot \alpha > \Sigma$.
\end{proof}

For the following lemma we introduce some further vocabulary, for a bus route~$B$, we say a bus stop~$b_l$ is \emph{between~$b_i$ and~$b_j$} if~$i < l < j$ or~$j < l < i$, and we further say two vertices~$u',v' \in V(G') \setminus \{c\}$ are \emph{connected by~$B$} if there exist~$b_i = u', b_j = v'$ such that all bus stops between~$b_i$ and~$b_j$ are~$c$.

\begin{lemma}
	If there exists an independent set~$X$ of~$G$ with~$\abs{X}=k$, then there exists a certifying solution~$B$. \label{correct_route_exists}
\end{lemma}

\begin{proof}
\renewcommand{\qedsymbol}{$\diamond$}
	Let~$\vec{X} = (x_1, \dots , x_k)$ be any permutation of~$X$.
	The following bus route~$B$ satisfies the lemma.
	\begin{align*}
		b_{3i-2} ={}& X_{x_i} \in V_1 {}&\text{ for } i \in [k] \\
		b_{3i-1} ={}& c {}&\text{ for } i \in [k] \\
		b_{3i} ={}& X_{i} \in V_2 {} & \text{ for } i \in [k]
	\end{align*}
	Notice that every bus stop that is not~$c$ has~$c$ as a neighboring bus stop. Furthermore, every agent has~$c$ on its shortest path meaning that no agent has to take a detour to visit a bus stop. Since~$x_i \neq x_j$ for~$i \neq j$ and~$v_e \notin X$, it follows that~$b_i \neq b_j$, for~$i \neq j, b_i \neq c, b_j \neq c$.

    We will now show for how many edges agents can use the bus by differentiating the types of agents.

   	$A_N$: There are exactly~$2k$ distinct pairs of neighboring bus stops in~$B$. There are exactly~$\beta$ neighbor agents in~$A_N$ for every edge~$e' \in E(G')$. Thus,~$2k \cdot \beta$ neighbor agents~$a_{v',i} \in A_N$ use the bus for their one station.
	
   	$A_E$: Since~$X$ is an independent set of~$G$ and since every bus stop other than~$c$ neighbors~$c$, every agent in~$A_E$ with an endpoint~$b_i \in B, b_i \neq c$ uses the bus for the edge~$\{b_i,c\}$. Since~$2$ such agents exist for every pair of vertices~$b_i, v' \in V_1$ there are~$2(n-1)$ such agents for every bus stop~$b_i \in V_1$. Since~$k$ vertices in~$V_1$ are visited by~$B$, this represents a total of~$k \cdot 2(n-1)$ edges travelled by bus for the agents in~$A_E$.
	
    $A_A:$ Since every vertex in $V_2$ has $c$ as a neighboring bus stop every alternation agent can use the bus between its endpoint in $V_2$ and $c$ for a length of $1$. 
	Furthermore, alternation agents whose endpoints are connected by $B$ use the bus for both edges. This is the case for $(2k-1)$ such pairs of bus stops each with $\gamma$ corresponding agents in \(A_A\) resulting in a total of~$(n \cdot k + 2k-1)  \cdot \gamma$ edges travelled by bus by alternation agents.

    If every agent travels its shortest path and at least $\Theta = 2k \cdot \beta + k \cdot 2 (n-1) + 2k \cdot (n-1) \cdot \gamma$ edges among them are travelled by bus, then $f_{\text{time}}(B) \leq \alpha \cdot \Theta + (\alpha+1) \cdot (\Delta - \Theta) = \Sigma$. 
	\label{proof_Rightarrow}
\end{proof}

\begin{lemma}
	A certifying solution contains exactly~$2k$ distinct pairs of neighboring bus stops along the route. \label{2k_pairs_exact}
\end{lemma}

\begin{proof}
\renewcommand{\qedsymbol}{$\diamond$}
	We first show that no certifying solution contains less than~$2k$ such pairs. Every neighbor agent~$a_{v',i} \in A_N$ uses the bus if and only if there exist neighboring bus stops along the bus route between~$X_{v'}$ and~$c$. Less than~$2k$ distinct pairs of neighboring bus stops means at least~$(\abs{V(G')}-2k) \cdot \beta$ neighbor agents taking the bus. Thus, any such bus route cannot achieve an objective function any lower than $\alpha \cdot (\Delta - (\abs{V(G')}-2k) \cdot \beta) + (\alpha + 1) \cdot (\abs{V(G')}-2k) \cdot \beta$.
	\begin{align*}
		(\Delta - \Theta) ={}& (\abs{V(G')}-1) \cdot \beta + n \cdot (n-1) \cdot 2 \cdot \gamma \\&{}+ \binom{n}{2} \cdot (n-3) \cdot \gamma + 4 \cdot \binom{n}{2} - \Theta \\
		<{}& (\abs{V(G')}-1) \cdot \beta + 3n^3 \cdot \gamma - 2k \cdot \beta \\
		={}& (\abs{V(G')}-2k) \cdot \beta
	\end{align*}
	Since~$(\abs{V(G')}-2k) \cdot \beta > \Delta - \Theta$, and \(\Delta - \Gamma\) is an upper bound for the number of edges travelled by bus in a certifying solution, no solution with~$2k-1$ or fewer distinct pairs of neighboring bus stops is certifying.

	To prove that no route can contain more than~$2k$ distinct pairs of neighboring bus stops, we show that there are never~\(3\) consecutive such pairs. Toward contradiction, let~$b_i$ be a bus stop such that the pairs~$(b_i, b_{i+1}),$ $(b_{i+1}, b_{i+2})$ and~$(b_{i+2}, b_{i+3})$ are all distinct and neighboring for~$i \in [3k-3]$. Since~$b_{i+1}$ and~$b_{i+2}$ are neighboring one of them is~$c$. Without loss of generality~$b_{i+1} = c$. Since~$c$ is the only neighbor of of~$b_{i+2}$ in~$G'$, it follows that~$c = b_{i+3}$. Thus, the pairs~$(b_{i+1},b_{i+2})$ and~$(b_{i+2},b_{i+3})$ are not distinct; a contradiction.

	Since every certifying solution contains at least~$2k$ distinct pairs of neighboring bus stops along the route but never more than~$2$ consecutive pairs of neighboring bus stops, a certifying solution contains exactly~$k$ sequences of~$2$ consecutive instances of distinct neighboring bus stops along the route resulting in exactly~$2k$ distinct pairs of neighboring bus stops.
\end{proof}

\begin{lemma}
	A certifying solution~$B$ contains exactly~$2k-1$ distinct pairs of connected bus stops~$b_i,b_j$ such that~$\{b_i,b_j\} \cap V_1$ and~$\{b_i,b_j\} \cap V_2$ are non-empty.
	\label{2kplus1connected}
\end{lemma}

\begin{proof}
\renewcommand{\qedsymbol}{$\diamond$}
	By \Cref{2k_pairs_exact}, a certifying solution contains exactly~$k$ sequences of~$2$ consecutive instances of distinct neighboring bus stops along the route. All such sequences start and end with a vertex in~$V_1 \cup V_2$ while the middle bus stop of such a sequence is~$c$. Thus, a certifying solution contains~$c$ exactly~$k$ times. This means that, for a certifying bus route~$B$, there are exactly~$2k-1$ distinct pairs of connected vertices~$u',v' \in V_1 \cup V_2$.
	Assume towards contradiction that at most~$2k-2$ of those pairs are between~$V_1$ and~$V_2$. Then at  most~$(2k-2) \cdot \gamma$ agents in \(A_N\) can travel both edges by bus. Since at most $2k \cdot \beta$ edges are travelled by bus by neighbor agents even if all edges for agents in $A_E$ are travelled by bus this leaves at most $2k \cdot \beta + (2k - 2 + n \cdot k) \cdot \gamma + 4 \cdot \binom{n}{2}$ edges to be travelled by bus in~$B$.
	\begin{align*}
		\Theta ={}& 2k \cdot \beta + (2k - 1 + n \cdot k) \cdot \gamma + 2k \cdot (n-1) \\
		>{}& 2k \cdot \beta + (2k - 1 + n \cdot k) \cdot \gamma \\
		>{}& 2k \cdot \beta + (2k - 2 + n \cdot k) \cdot \gamma + 4 \cdot \binom{n}{2}
	\end{align*}
	Since $2k \cdot \beta + (2k - 2 + n \cdot k) \cdot \gamma + 4 \cdot \binom{n}{2} < \Theta$, it follows that $f_{\text{time}}(B) > \Sigma$; a contradiction.
	\end{proof}
We will now prove \Cref{objective_function}.
\begin{proof}[Proof of \Cref{objective_function}.]
\renewcommand{\qedsymbol}{$\diamond$}
	\((\Rightarrow)\) By \Cref{correct_route_exists}, for any graph with an independent set of size at least~$k$, there exists a certifying solution.

	\((\Leftarrow)\) Let~$B$ be a certifying solution. To prove that we can construct an independent set of size~$k$ in~$G$ from~$B$, we will go through the properties of a certifying solution. By \Cref{2k_pairs_exact}, every certifying solution contains exactly~$2k$ distinct pairs of neighboring bus stops along the route. Thus, at most $2k \cdot \beta$ edges are travelled by bus by neighbor agents.
	By \Cref{2kplus1connected} there are exactly $2k-1$ distinct pairs of connected vertices between $V_1$ and $V_2$. First, this means that $B$ contains exactly $k$ vertices in $V_1$ and $V_2$ respectively. Second, since at most~$(2k-1) \cdot \gamma$ alternation agents can travel both edges by bus, even if every agent in $A_A$ travels at least one edge by bus at most~$(2k - 1 + n \cdot k) \cdot \gamma$ edges are travelled by bus by alternation agents.
	Since~$f_{\text{time}}(B) \leq \Sigma$ at least~$\Theta$ edges in~$B$ are travelled by bus. This leaves at least $\Theta - 2k \cdot \beta - (2k - 1 + n \cdot k) \cdot \gamma = k \cdot 2(n-1)$ edges to be travelled by bus by agents in $A_E$. Let $X = \{v \in V \mid X_{v}  \in B\}$ be the set of vertices in $G$ represented by the bus route. Since there are~$k$ bus stops in~$V_1$ and none of them are connected, every agent in $A_E$ travels at most one edge by bus and $\abs{X}=k$. For every bus stop $X_v \in V_1$, we define $A_v$ as the set of agents in~$A_E$ that have~$X_v$ as an endpoint. For every bus stop $X_v \in V_1$, there are exactly $2(n-1)$ agents in $A_E$ with one endpoint in $X_v$. As such $\abs{A_v} = 2(n-1)$ for all $X_v \in B, X_v \in V_1$. Agents in $A_E$ can only use the bus if one of their endpoints in $V_1$ is also in~$B$. Since agents in $A_E$ travel at least $k \cdot 2(n-1)$ edges by bus, every agent in~$A_E$ travels at most one edge by bus and every agent in $A_E$ that uses the bus is in some~$A_v$ it follows that $\abs{\bigcup_{v \in X} A_v} \geq k \cdot 2(n-1)$. Since $\sum_{v \in X} A_v = k \cdot 2(n-1)$ and $\abs{\bigcup_{v \in X} A_v} \geq k \cdot 2(n-1)$, this means that the sets~$A_v$ are pairwise disjoint.
	Since the sets $A_v$ are pairwise disjoint there is no agent in $A_E$ with endpoints $X_v, X_u \in B$.
	By the definition of $A_E$ this means that there exist no vertices $u,v \in X$ such that $\{u,v\} \in E$. Thus, $X$ is an independent set.
\end{proof}
Since $c$ is a bus stop in every bus route and between the endpoints for every agent, the proof above works for $g_{\text{time}}$ as well.
\end{proof}

Theorem \ref{thm:time_hard_on_unit_stars} and Proposition \ref{prop:g_time_poly_on_unweighted_graphs} cover all cases for stars and general trees except~$f_{\text{time}}$ in the unweighted case. Theorem~\ref{thm:f_time_poly_on_unweighted_stars} shows that this can be solved in polynomial time on stars while Theorem~\ref{thm:f_time_hard_on_unweighted_trees} shows that it is NP-hard on general trees.

\begin{theorem}\label{thm:f_time_poly_on_unweighted_stars}
	\mbsp{} with objective function~$f_{\text{time}}$ can be solved in polynomial time on stars in the unweighted case.
\end{theorem}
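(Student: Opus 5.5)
The plan is to reduce $f_{\text{time}}$ on an unweighted star to a purely combinatorial choice of a stop \emph{set}, and then optimize that choice by a case distinction on whether the center $c$ is a stop.

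\textbf{Step 1: the order of the route is irrelevant.} Since $w_a\equiv w_b\equiv 1$, every bus walk $\Pi_B(i,j)$ has weight at least $\pi(B[i],B[j],w_b)$, which equals the graph distance $d(B[i],B[j])$. By the triangle inequality,
\[
d(s_a,B[i])+\pi_B(i,j)+d(B[j],t_a)\ \ge\ d(s_a,B[i])+d(B[i],t_a).
\]
The right-hand side is attained by choosing $j=i$. Hence, for $S$ the set of stop locations,
\[
f_{\text{time}}(B)=\sum_{a\in A}\min_{b\in S}\bigl(d(s_a,b)+d(b,t_a)\bigr).
\]
So we only need to choose a set $S$ with $|S|\le k$; repeated stops are allowed, so the route can always be padded to $k$ stops. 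Each agent's cost depends only on $S$, and is at least $d(s_a,t_a)$.

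\textbf{Step 2: the case $c\in S$.} The center lies on every shortest path of the star. Therefore every agent with $s_a\neq t_a$ attains its lower bound $d(s_a,t_a)$. Agents with $s_a=t_a=c$ cost $0$. An agent with $s_a=t_a=y$ for a leaf $y$ costs $0$ if $y\in S$ and $2$ otherwise. Writing $L_y$ for the number of agents with $s_a=t_a=y$, the best set of this type is $c$ together with the $k-1$ leaves of largest $L_y$. This is computed by sorting.

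\textbf{Step 3: the case $c\notin S$ (main obstacle).} Now every agent whose endpoints are not both in $S$ pays extra. An agent between two distinct leaves costs $2$ if it is covered (one endpoint in $S$) and $4$ otherwise. An agent between a leaf and $c$ costs $1$ or $3$. A loop agent at a leaf costs $0$ or $4$. A loop at $c$ costs $2$. So the cost equals a constant minus $2$ times a weighted count of covered agents, with loop agents weighted double. Taken in isolation, this is a partial-vertex-cover objective, which is hard in general. So the plan must exploit competition with Step 2.

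I would use the exchange $S'=(S\setminus\{x\})\cup\{c\}$ for $x\in S$. Its only loss is $2L_x$, because agents between $x$ and another vertex already had cost $d(s_a,t_a)$ under $S$. Its gain is at least $2$ for every agent not covered by $S$, together with the loops at $c$ and the loops at leaves outside $S$. Thus a $c$-free set can beat all sets of Step 2 only if every $x\in S$ satisfies $L_x>U(S)$, where $U(S)$ is the number of uncovered agents.

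I would try to turn this into structure. Every stop is a heavy loop vertex, so $S$ should consist of leaves with large $L$, and the set of competing candidates should be small. A concrete approach is to guess the stop $x_{\min}\in S$ with the smallest $L$-value; this gives $n$ guesses. By the inequality above, at most $L_{x_{\min}}$ agents may remain uncovered. I would then argue that among the leaves with $L_y\ge L_{x_{\min}}$, the greedy or sorted choice, with ties broken by coverage, is optimal. I expect this step to need the most care, and a possible fallback is to show directly that $k=1$ is essentially the only case where $c\notin S$ can win.

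Finally, the algorithm returns the better of the two cases; all steps run in polynomial time.
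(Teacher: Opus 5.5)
Your Steps~1 and~2, and the exchange $S'=(S\setminus\{x\})\cup\{c\}$ in Step~3, are correct and follow exactly the paper's approach. Your bookkeeping even gives the exact identity $f_{\text{time}}(S')-f_{\text{time}}(S)=2L_x-2U(S)$. The gap is that you stop one line short of the conclusion and put an unproved plan in its place: guessing $x_{\min}$, then ``greedy with ties broken by coverage''.

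The missing observation is the following. Let $U(S)$ count every agent with no endpoint in $S$. Then it counts the $L_v$ loop agents at each unselected leaf $v$, so $U(S)\ge\sum_{v\notin S}L_v\ge L_v$. Together with $L_x>U(S)$ for all $x\in S$, this gives $L_x>L_v$ for every selected leaf $x$ and every unselected leaf $v$. So a $c$-free set that strictly beats all $c$-containing sets must be \emph{the} set of $k$ leaves with the largest $L$-values, separated from the remaining leaves by a strict gap. There is therefore only one $c$-free candidate. Sorting the leaves by non-increasing $L$, the algorithm returns the better of $\{c,v_1,\dots,v_{k-1}\}$ and $\{v_1,\dots,v_k\}$. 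No guessing and no coverage-based tie-breaking is needed, because a tie at the boundary already makes some $c$-containing set at least as good.

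Without this deduction, your proposal does not show that the $c$-free case is polynomial. You note yourself that this case, taken in isolation, looks like partial vertex cover, so this step is the crux of the theorem.

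Your fallback is also false: it is not true that $c\notin S$ can essentially win only for $k=1$. Take $k\ge 2$ leaves $y_1,\dots,y_k$, each carrying $L\ge 1$ loop agents, and no other agents. Then $\{y_1,\dots,y_k\}$ has cost $0$. Every set containing $c$ omits some $y_i$ and costs at least $2L$.
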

\begin{proof}
In the unweighted star setting, the order of the bus stops is irrelevant. Although every agent must use the bus, an agent may board and leave at the same stop. Hence, for any selected stop set, each agent can realize its best possible cost by choosing a suitable single stop from that set. Consequently, we regard a solution simply as a set of stops.
	
	Let~$(G,A,k)$ be an instance of \mbsp{} on an unweighted star~$G$ with center~$c$ and leaf set~$L$.
	For each leaf $v\in L$, let $n_v$ be the number of agents whose start and
	destination are both $v$. Sort the leaves such that $n_{v_1}\ge \cdots \ge n_{v_{|L|}}$ .
	Consider the two candidate solutions~$B_c \coloneqq \{c, v_1,\dots,v_{k-1}\}$ and~$B_\ell \coloneqq \{v_1,\dots,v_k\}$.
	We compute~$f_{\text{time}}(B_c)$ and~$f_{\text{time}}(B_\ell)$ and return the better of the two.
	
	We prove correctness. First consider solutions that contain the center $c$. Then every agent whose endpoints are distinct vertices can realize
	its shortest-path distance by using $c$ as its chosen stop. Thus the only
	part depending on the selected leaves is the contribution of agents with
	both endpoints at the same leaf. Such agents contribute $0$ if their leaf is
	selected and $2$ otherwise. Hence, among all solutions containing $c$, an
	optimal solution selects the $k-1$ leaves with largest values $n_v$, namely
	$B_c$.
	
	It remains to consider solutions not containing $c$. Let $S\subseteq L$ be
	a set of $k$ selected leaves. Let $x\in S$ be a selected leaf with minimum
	$n_x$. Compare $S$ with the center-containing solution
	$S' := (S\setminus\{x\})\cup\{c\}$.
	We analyze how the objective changes when~$x$ is replaced by~$c$.
	
	The agents located at $x$ increase their contribution from $0$ to $2n_x$.
	For every leaf $v\notin S$, the agents located at $v$ decrease their
	contribution from $4n_v$ to $2n_v$. Finally, every agent with distinct
	endpoints neither of which lies in $S$ decreases its contribution from $4$
	to $2$. Let $q(S)$ denote the number of such agents. It follows that
	$f_{\text{time}}(S') - f_{\text{time}}(S)
	=
	2n_x
	-
	2\sum_{v\notin S} n_v
	-
	2q(S)$.
	Consequently, if
	$n_x \le \sum_{v\notin S} n_v + q(S)$,
	then $S'$ is no worse than $S$.
	
	Now suppose that $S$ is strictly better than every solution containing
	$c$. Then the above inequality must be false, and therefore
	$n_x > \sum_{v\notin S} n_v + q(S)
	\ge
	\sum_{v\notin S} n_v$.
	In particular, every selected leaf has strictly larger $n_v$ than every
	unselected leaf. Hence $S$ must be exactly the set of the $k$ leaves with
	largest values $n_v$, that is, $S=B_\ell$ up to ties. If there is a tie
	between a selected and an unselected leaf, then $n_x\le \sum_{v\notin S}n_v$,
	making the center-containing solution optimal.
	
	Thus every optimal solution is either the best solution containing the center, namely $B_c$, or the solution $B_\ell$ using the $k$ leaves with
	largest values $n_v$. Both objective values can be computed in polynomial
	time, hence
	\mbsp{} with objective~$f_{\text{time}}$ is polynomial-time solvable on unweighted stars.
\end{proof}

\begin{theorem}\label{thm:f_time_hard_on_unweighted_trees}
	\mbsp{} with objective function~$f_{\text{time}}$ is NP-hard on trees in the unweighted case.
\end{theorem}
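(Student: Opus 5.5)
We will reduce from an NP-hard problem whose selection structure matches ``pick $k$ items with pairwise interactions,'' and the most natural choice is \textsc{Independent Set} (or \textsc{Clique}), adapting the star construction of Theorem~\ref{thm:time_hard_on_unit_stars}. In that construction the hardness came from two ingredients that are unavailable in the unweighted setting: the weight gap between $w_b=\alpha$ and $w_a=\alpha+1$, which made the bus strictly better per edge, and the multiplicities $\beta,\gamma$ that forced the route structure. Theorem~\ref{thm:f_time_poly_on_unweighted_stars} shows that on unweighted stars the route order is irrelevant, so a tree of larger depth is needed. The plan is to replace weights by path lengths. An edge of ``cost~$\alpha$'' becomes a subdivided path, and the advantage of using the bus under $f_\text{time}$ in the unweighted case comes only from the fact that agents must use the bus. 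An agent whose start and destination are far from every stop pays twice the distance to the nearest stop plus the bus ride. Hence selecting stops near heavy agent clusters is rewarded, and stop-to-stop interactions arise because an agent travelling between two selected regions pays the bus distance between the \emph{consecutive} stops of the route rather than the tree distance.

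Concretely, I would build a spider: a center $c$ with one long leg for each vertex $v$ of $G$, of length $L$, ending in a leaf $X_v$. Many ``anchor'' agents with $s=t=X_v$ (multiplicity $M$) are added for each $v$. This forces, with a budget of exactly $k$ stops, that the $k$ stops lie at $k$ distinct leaves $X_v$. Each unused leaf then costs $2ML$ at least, and moving a stop off a leaf costs at least $2M$. For each edge $\{u,v\}\in E(G)$ we add an agent from $X_u$ to $X_v$. If both endpoints are stops this agent pays the bus distance $\pi_B(i,j)$, which is at least the tree distance $2L$ and equals it only if $X_u$ and $X_v$ are consecutive in the route. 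I would rather use non-edge agents, as in the earlier proof, and choose the threshold so that the cheapest outcome is achieved exactly when the selected $k$ leaves are pairwise non-adjacent in $G$. Agents whose endpoints are not both selected walk to a nearest stop and ride, paying about $4L$ and independently of which other leaves are chosen. An agent with both endpoints selected pays $\pi_B(i,j)$, which depends on the order.

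The main obstacle is exactly this order dependence. An agent between two selected leaves that are far apart in the route pays up to $2L(k-1)$, not $2L$, so the clean ``edge inside selection'' accounting fails. My first attempt to fix this is to attach to each edge agent an alternative: let edge agents be $(X_u,c')$ for a fixed extra hub leg instead of $X_u$ to $X_v$, so that only the nearest stop matters. I suspect this loses the pairwise interaction, and that with nearest-stop costs only the problem becomes a tree $k$-median-like problem, which is polynomial. So the interaction must genuinely use $\pi_B$, and I would instead make the route structure forced, as in Theorem~\ref{thm:time_hard_on_unit_stars}. Gadget agents with huge multiplicity between leaves $X_v$ and designated ``slot'' leaves $Y_1,\dots,Y_k$ would force an alternating route $X,Y_1,X,Y_2,\dots$, so that every selected vertex leaf is adjacent in the route to a slot. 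Pairwise costs between selected leaves then become uniform. The remaining step is a careful calibration of leg lengths and multiplicities, using subdivision lengths polynomial in $n$ in place of $\alpha,\beta,\gamma$, so that each tier of agents dominates the next. I would then verify the two directions as in Lemma~\ref{objective_function}. Since only NP-hardness is claimed, the parameter $k$ may grow, which leaves freedom to use more stops in the slot gadget if needed.
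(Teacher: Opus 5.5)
Your plan rests on a false premise about the objective. In the unweighted case $w_a=w_b$, so for all $i,j$ the triangle inequality gives $\pi_a(B,i,j)+\pi_B(i,j)\ge \pi(s_a,B[i],w_a)+\pi(B[i],B[j],w_a)+\pi(B[j],t_a,w_a)\ge \pi(s_a,B[i],w_a)+\pi(B[i],t_a,w_a)$. The choice $j=i$ (board and alight at the same stop, with $\pi_B(i,i)=0$) attains this bound. Hence every agent pays exactly $\min_{b\in B}\bigl(\pi(s_a,b,w_a)+\pi(b,t_a,w_a)\bigr)$, which depends only on the \emph{set} of stops; the paper uses exactly this (``the order of stops is irrelevant'').

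Three consequences follow. First, your edge agent between two selected leaves pays $2L$, not $\pi_B(i,j)$, so the ``main obstacle'' does not exist. Second, the slot-leaf gadget cannot force an alternating route, because no agent's cost distinguishes two orders of the same stops. The gap $w_a=w_b+1$ in Theorem~\ref{thm:time_hard_on_unit_stars} was essential, and subdividing edges does not recreate it. Third, since each agent is served by its single best stop, the objective is a sum of minima. It can express only covering interactions (``at least one of these stops''), never the ``both endpoints selected'' interactions that an \textsc{Independent Set} or \textsc{Clique} encoding needs. The idea you discarded was the right one. On a tree, $\pi(s_a,b,w_a)+\pi(b,t_a,w_a)$ equals $\pi(s_a,t_a,w_a)$ plus twice the distance from $b$ to the $s_a$--$t_a$ path. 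So clients are paths rather than points, this is not Tamir-style $k$-median, and it naturally encodes \textsc{Vertex Cover}.

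Your forcing step also fails as stated. With $M$ anchors at every leaf of the spider, putting a stop at $c$ instead of a $k$th leaf raises that leaf's anchor cost by $2LM$. It also lowers each of the $n-k$ unselected leaves' anchor costs from $4LM$ to $2LM$. So for $n-k\ge 2$ the optimum uses $c$, which then serves every edge agent at its minimum cost $2L$. This is the same effect that makes unweighted stars easy (Theorem~\ref{thm:f_time_poly_on_unweighted_stars}).

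The paper neutralizes the center as follows. The tree has a root $r$, a child $v$ of $r$ for each vertex, and a pendant leaf $v'$ below each $v$. Then $N$ agents with $s=t=v'$ force stops at all $v'$, using $|V(G)|$ of the $k'=|V(G)|+k$ stops. An unselected $v$ therefore costs its $M$ agents $2M$ whether or not $r$ is a stop. A stop at $r$ merely wastes one vertex selection at cost $2M>2|E(G)|$, so the remaining $k$ stops sit on vertices $v$. The edge agent from $u$ to $v$ pays $2$ if $u$ or $v$ is a stop and $4$ otherwise. Hence $f_\text{time}(B)\le 2|E(G)|+2M(|V(G)|-k)$ is achievable if and only if $G$ has a vertex cover of size $k$.
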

\begin{proof}
	We reduce from a \textsc{Vertex Cover} \cite{DBLP:conf/coco/Karp72} instance~$(G,k)$.
	We construct a tree~$T$ as follows.
	Add a root~$r$.  
	For every vertex~$v\in V(G)$, add two vertices~$v$ and~$v'$ to~$T$, connect~$v$ to~$r$, and connect~$v'$ to~$v$. This yields a tree of height two.
	
	We now define the agent set~$A$.  
	Choose integers~$M > 2|E(G)|$ and~$N > 2M\cdot (|V(G)| - k) + 2|E(G)|$.
	For every~$v\in V(G)$, add~$M$ agents whose start and end vertex is~$v$ (vertex agent set~$A_v$), and~$N$ agents whose start and end vertex is~$v'$ (prime agent set~$A_{v'}$).
        For every edge~$e=\{u,v\}\in E(G)$, add one agent~$a_e$ starting at~$u$ and ending at~$v$.  
	Set the allowed number of bus stops to~$k' = |V(G)| + k$.  
	This gives an instance~$(T,A,k')$ of \mbsp.
	
	We now show that~$(G,k)$ is a yes-instance if and only if~$(T, A, k')$ has a bus route~$B$ with~$f_{\text{time}}(B) \leq 2|E(G)| + 2M\cdot (|V(G)| - k)$.
	
	($\Rightarrow$) Let~$S$ be a vertex cover of~$G$ with~$|S| = k$. Place bus stops at all~$v'$ and at all~$v\in S$, using~$|V(G)| + k$ stops.  
	The order of stops is irrelevant for the objective, since for~$f_{\text{time}}$ an empty bus has no costs and in the unweighted case there is no difference between agents walking and agents taking the bus.
	
	For every~$v\notin S$, all agents in~$A_v$ must walk two edges to reach the corresponding prime bus stop and therefore contribute~$2M$.  
	For~$v\in S$ the contribution is~$0$. 
	Hence, the vertex agents contribute~$2M\cdot (|V(G)| - k)$ in total. Furthermore, since we place bus stops at all~$v'\in V(T)$ prime agents also have contribution~$0$.
	
	For every edge~$e=\{u,v\}$, at least one endpoint is a bus stop, so~$a_e$ can take its shortest path and incurs cost~$2$.
	Hence, the edge agents contribute~$2|E(G)|$.
	Thus,~$f_{\text{time}}(B) = 2|E(G)| + 2M\cdot (|V(G)| - k)$.
	
	($\Leftarrow$) Let~$B$ be a bus route with~$f_{\text{time}}(B) \leq 2|E(G)| + 2M\cdot (|V(G)| - k)$. 
	Since~$N > 2M\cdot (|V(G)| - k) + 2|E(G)|$, any~$v'$ not chosen as a bus stop would force each of its~$N$ agents to walk to a bus stop and back, thus leading to a cost of at least~$2$ per agent. By definition~$2N > 2M\cdot (|V(G)| - k) + 2|E(G)|$ exceeding the objective bound. Thus, all~$v'$ must be bus stops.
	Furthermore, the root~$r$ cannot be a bus stop.  
	Otherwise, only~$k-1$ vertices from~$V(G)$ could be chosen, and the vertex agents would contribute
	$2M\cdot |V(G)| - 2M(k-1)
		= 2M + 2M(|V(G)| - k)$,
	which is larger than the allowed objective value since~$M > 2|E(G)|$.
	
	Next, since all~$v'$ are bus stops, each~$A_v$ agent contributes~$2M$ unless~$v$ itself is a bus stop, in which case its contribution is~$0$.  
	To reach the required total of~$2M(|V(G)| - k)$, exactly~$k$ vertices of~$V(G)$ must be selected as bus stops.  
	Let this set be~$S_B$.
	The vertex-agent contribution is now fixed, so the edge agents must contribute at most~$2|E(G)|$.  
	An edge agent has cost~$2$ if at least one endpoint of its edge is a bus stop, and cost at least~$4$ otherwise (since~$r$ is not a bus stop and any detour increases distance).  
	Thus, to keep the cost at most~$2|E(G)|$, every edge must have at least one endpoint in~$S_B$. Thus,~$S_B$ is a vertex cover of size~$k$ and~$(G,k)$ is a yes-instance.
\end{proof}

Finally, we consider the case where the input graph is a path. 
Designing an efficient algorithm becomes challenging when agent weights are arbitrary. Intuitively, for an agent it may not be optimal to board and leave the bus at the stops closest to their start and end points, since other stops may lie in between. Thus, determining where an agent uses the bus may require knowing the full set of bus stop locations. We use this observation to prove the following.

\begin{theorem}\label{thm:time_hard_on_arbitrary_paths}
	\mbsp{} with objective function~$f_{\text{time}}$ or~$g_{\text{time}}$ is W[1]-hard on paths with arbitrary agent weights.
\end{theorem}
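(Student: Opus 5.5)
We reduce from \textsc{Multicolored Clique}: given a graph $G$ whose vertex set is partitioned into color classes $V_1,\dots,V_k$, decide whether $G$ has a clique containing exactly one vertex of each class. The plan is to build a path $P$ consisting of $k$ consecutive segments $P_1,\dots,P_k$, one per color class. Segment $P_i$ contains one vertex $y_v$ for each $v\in V_i$, and consecutive segments are separated by long connector stretches. All bus weights are tiny, e.g.\ $w_b\equiv 0$ or a constant $\epsilon$. Then $f_\text{time}$ essentially measures walking cost, and the order of stops is irrelevant up to negligible terms: on a path, the cheapest visiting order is the sorted one, and any agent can ride in either direction. The new parameter is $k'=k$ (or $O(k)$). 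Because agent weights are arbitrary, each agent can be given its own walking metric on $P$. This is the crucial freedom: the shortest walk from a start to the bus no longer just picks the geometrically nearest stop.

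We use three groups of agents. First, \emph{selection agents}, heavy and with multiplicity $M$. For each $i$, they start and end inside $P_i$ and have cheap weights only on $P_i$ and prohibitive weights elsewhere. This forces exactly one stop per segment, so the stops encode a choice $v_i\in V_i$. Second, \emph{edge agents}, one for each pair $i<j$ and each non-edge or edge pattern. The agent for pair $(i,j)$ starts in $P_i$ and ends in $P_j$, with weights chosen so that its total cost is low exactly when the stops $y_{v_i}$ and $y_{v_j}$ are adjacent in $G$. The natural way to realize this is with a weight profile: within $P_i$ the agent for edge $\{u,v\}$ walks for free to $y_u$ and pays $1$ per edge elsewhere, and the same holds in $P_j$ for $y_v$. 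Summing over the edges of $G$ between $V_i$ and $V_j$, the pair $(i,j)$ contributes a fixed baseline minus one saving exactly when $\{v_i,v_j\}\in E$. Third, optional \emph{normalizing agents} with mirrored weights. These make the baseline independent of which vertex is chosen, in the style of the standard encoding where each chosen vertex must ``pay'' identically.

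The correctness argument proceeds as follows. For $(\Rightarrow)$, place stops at $y_{v_1},\dots,y_{v_k}$ and compute that the cost equals a threshold $\Sigma$, namely the baseline minus $\binom{k}{2}$. For $(\Leftarrow)$, the heavy selection agents force exactly one stop per segment. The threshold then forces all $\binom{k}{2}$ savings, and each saving is an edge of $G$, so the chosen vertices form a multicolored clique. For $g_\text{time}$, we make direct walking prohibitively expensive for every agent by putting a huge weight on each agent's own direct route, for example on the connector stretches. Every agent then uses the bus, and the same argument applies verbatim.

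The main obstacle is the edge-agent gadget. An agent's cost is a \emph{minimum} over boarding and exit stops, so we must ensure that no agent finds an unintended cheap combination, such as boarding at a stop in a third segment or exploiting stops in between. We also need the baseline to be independent of the selected vertices. I would first try giving each edge agent weight $0$ exactly on the subpath from the segment boundary to its target vertex and a large weight $L$ on all other edges outside its two segments; this rules out third segments. For the baseline, I would pad with complementary agents whose free region is the other side of the target vertex, so that for every choice the pair of costs sums to a constant, and only the exact hit $y_u$ with $y_v$ simultaneously yields an extra saving. Verifying this additivity carefully is where I expect most of the work.
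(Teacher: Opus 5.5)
Your reduction fails: with $w_b\equiv 0$, the time objectives lose exactly the feature that makes them hard. If all bus weights are $0$, then $\pi_B(i,j)=0$ for all $i,j$. Hence $f_{\text{time}}(B)=\sum_{a}\bigl(\min_i \pi(s_a,B[i],w_a)+\min_j \pi(B[j],t_a,w_a)\bigr)$, which is literally $f_{\text{energy}}(B)$, since the term $\pi_B(1,k)$ vanishes as well. Likewise $g_{\text{time}}=g_{\text{energy}}$. Theorem~\ref{thm:energy_poly_on_arbitrary_paths} solves both energy objectives in polynomial time on paths with arbitrary agent weights. So every instance your reduction produces is polynomial-time solvable, and a correct reduction would give $\mathrm{P}=\mathrm{NP}$.

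You can see the failure directly in your edge gadget. Suppose each segment holds one stop and the weight $L$ confines an agent to its two segments. Then the agent for $\{u,v\}$ pays $\alpha_u(v_i)+\beta_v(v_j)$, a function of $v_i$ plus a function of $v_j$. Any sum of such agents, normalizing agents included, again has the form $F(v_i)+H(v_j)$. No such function can produce a saving ``exactly when $y_u$ and $y_v$ are hit simultaneously'': fixing $v_i\neq u$ forces $H$ to be constant, and fixing $v_j\neq v$ forces $F$ to be constant.

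Even without $L$, on a path the nearest stop to an endpoint is one of the two stops flanking it. The only interactions are therefore between stops that are consecutive along the path, a chain structure, which is exactly what the dynamic program exploits. The additivity you planned to verify is the obstruction itself, not a technicality. Taking $w_b\equiv\epsilon$ with negligible $\epsilon$ does not help, because the separable walking part must still carry the whole encoding.

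The missing idea is to let the bus term $\pi_B(i,j)$ create the interaction. The paper reduces from \textsc{Independent Set}. The path consists of one vertex edge $\{X_v^-,X_v^+\}$ per vertex with bus weight $0$, joined by connector edges of bus weight $3$. Each agent has weight $1$ on the one or two vertex edges it needs and $0$ elsewhere. So walking never costs it more than $2$, and it never rides across a connector. Consequently an agent rides over at most one vertex edge, and only if both ends of that edge are stops; with $2k$ stops, at most $k$ vertex edges are served.

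Every vertex edge is needed by exactly $n-1$ agents. But the agent for a graph edge $\{u,v\}$ needs both the $u$-edge and the $v$-edge, and can save only one unit even if both are served. This ``maximum instead of sum'' is the non-separable penalty: the bound $(n-k)(n-1)$ is attainable exactly when the served vertices form an independent set. For $f_{\text{time}}$ one additionally places stops at both ends of the path.

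If you want to keep \textsc{Multicolored Clique}, you would likewise need bus weights on the connectors that are large compared to walking. Then which stops are consecutive on the route, not just where stops are, affects the agents' costs.
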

\begin{proof}
We use a reduction from \textsc{Independent Set} which is W[1]-hard for the parameter~$k$~\cite{DF13} to \mbsp{}.
An \textsc{Independent Set} instance consists of a graph~\(G = (V, E)\) and a budget~\(k\). The goal is to decide if there is an \emph{independent set}~\(X \subseteq V\) with~\(\abs{X} = k\) and~\(\{v,u\} \notin E\) for all~\(v,u \in X\).

Let $(G=(V,E),k)$ be an instance of \textsc{Independent Set}. Let $n = |V|$.
Let $\vec{V}=(v_1,v_2,\dots,v_{n})$ be any permutation of $V$.
Let $<$ be the strict linear order on $V$ defined by $\vec{V}$.
We construct an instance $(G',A,w_b,w_a,2k)$ of \mbsp{} as follows:

\begin{align*}
	V(G') ={}& \{X_{v}^{\circ} \mid v \in V, \circ \in \{-,+\}\} \\
	E(G') ={}& E_1 \cup E_2 \\
		E_1 ={}& \{\{X_{v_i}^{-}, X_{v_i}^{+}\} \mid i \in [n]\} \\
		E_2 ={}& \{\{X_{v_i}^{+}, X_{v_{i+1}}^{-}\} \mid i \in [n-1]\}  \end{align*}
		\begin{align*}
	A ={}& \{a^{(0)}_{u,v} = (X_{u}^{-}, X_{v}^{+}, w_{uv}) \mid \{u,v\} \in E, u < v\}\\ 
		\cup{}& \{a^{(1)}_{u,v} = (X_{v}^{-}, X_{v}^{+}, w_{uv}), a^{(2)}_{u,v} = (X_{u}^{-}, X_{u}^{+}, w_{uv}) \\
		& \mid u,v \in V, u < v, \{u,v\} \notin E\}\\
		w_b(e) ={}& 0 \text{ for } e \in E_1, \ \ \ w_b(e) = 3 \text{ for } e \in E_2 \\
		w_{uv}(e)={}& 1 \text{ for } e \in \{\{X_{u, -}, X_{u, +}\}, \{X_{v, -}, X_{v, +}\}\}, u < v \\
		w_{uv}(e)={}& 0 \text{ otherwise}
\end{align*}

In this construction $G'$ is a path between $X_{v_1}^{-}$ and $X_{v_{n}}^{+}$. Every vertex $v$ of the original graph $G$ is represented by an edge $\{X_{v_i}^{-}, X_{v_i}^{+}\}$ in $G'$. Every agent incurs a cost of at most two when travelling by foot since they travel over up to two edges of weight $1$. The only way to reduce that cost is to make them travel one of those edges by bus. The construction ensures that exactly $k$ such edges can be chosen to be travelled by bus. Since each of those edges represents a vertex in $G$, those chosen $k$ edges satisfy the objective function if and only if they represent an independent set of size $k$ in $G$.
\begin{claim}
	The independent set instance has a solution if and only if the objective function $g_{\text{time}}$ for instance constructed above is at most $\Sigma = (n-k) \cdot (n-1)$.
\end{claim}
\begin{claimproof}
    We say an edge $e \in E(G')$ is \emph{necessary} for an agent $a$ if $e$ is on the unique path between the terminals of $a$ and $w_a(e) \neq 0$. We say $a$ \emph{needs} the edge $e$. 
    Every necessary edge contributes to the cost of~$a$
    if and only if it is not traversed by bus.

	For every agent $a^{(i)}_{u,v}$ the total cost to travel its entire path without the bus is at most $2$. Thus, in all solutions minimizing $g_{\text{time}}$ no agent uses the bus to travel over an edge $e \in E_2$. Since every path of length $2$ or more in $G'$ contains an edge in $E_2$, in all solutions minimizing $g_{\text{time}}$, no agent uses the bus for more than one edge $e \in E_1$.

	Each edge~$\{X_{w}^{-}, X_{w}^{+}\}$ in~\(E_1\) is necessary for~\(n-1\) agents. One per pair \(w \neq v \in V\). Namely, it is necessary for an agent~\(a\) if~\(a \in \{a_{v,w}^{(0)}, a_{w,v}^{(0)}, a_{v,w}^{(1)}, a_{w,v}^{(2)} \mid v \in V(G)\}\). The total cost of necessary edges is $2 \cdot \binom{n}{2} = n \cdot (n-1)$.
    
    To use the bus to travel exactly one edge $e = \{X_{v}^{-}, X_{v}^{+}\} \in E_1$ both $X_{v}^{-}$ and $X_{v}^{+}$ have to be bus stops. Furthermore, every vertex is only part of one edge in $E_1$ meaning that at most $k$ edges in $E_1$ can be travelled by bus. Thus, there remain $n-k$ edges in $E_1$ who have to be travelled by foot by the $n-1$ agents who need them. The minimum cost of a solution is $(n-k) \cdot (n-1)$. As a result, the only way to achieve $g_{\text{time}}(B) \leq (n-k) \cdot (n-1)$ is to choose $k$ edges in $E_1$ such that every agent for whom those edges are necessary travels over that edge by bus. For a route \(B\), let the up to~\(k\) edges \(e \in E_1\) that can individually be travelled by bus be called \(E_B\). We define \(V_B = \{v \in V \mid \{X_{v}^{-}, X_{v}^{+}\} \in E_B\}\) as the set of corresponding vertices in $G$. \(B\) satisfies~$g_{\text{time}}(B) \leq (n-k) \cdot (n-1)$ if~\(|E_B| = k\) and every agent with 2 necessary edges(\(a_{u,v}^{(0)}\)) has at most one of them in \(E_B\). Let \(B\) be a bus route \(B\) such that~\(|E_B| = k\) and \(V_B\) is an independent set in~$G$. Then~since $V_B$ is an independent set, there is no agent \(a_{u,v}^{(0)}\) such that \(u,v \in V_B\). In contrast, if \(V_B\) is not an independent set in $G$, there exists an edge~\(\{u,v\} \in E, u,v \in V_B\) and thus both edges necessary for~\(a_{u,v}^{(0)}\) are in \(E_B\), meaning that $B$ does not satisfy~$g_{\text{time}}(B) \leq (n-k) \cdot (n-1)$.
\end{claimproof}

The proof extends to $f_{\text{time}}$ by adding bus stops at both ends of the path, ensuring that every agent can reach some stop without incurring cost.
\end{proof}

If we further restrict the problem to the unweighted case then we can solve \mbsp{} in polynomial time. The case for~$g_{\text{time}}$ is covered by Proposition~\ref{prop:g_time_poly_on_unweighted_graphs} while the case for~$f_{\text{time}}$ is covered by Theorem~\ref{thm:f_time_poly_on_unweighted_paths}.

\begin{theorem}\label{thm:f_time_poly_on_unweighted_paths}
	\mbsp{} with objective function~$f_{\text{time}}$ can be solved in polynomial time on paths in the unweighted case.
\end{theorem}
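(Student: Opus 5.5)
We plan to reduce the problem, as in the proof of Theorem~\ref{thm:f_time_poly_on_unweighted_stars}, to choosing a \emph{set} of stop positions, and then solve the resulting one-dimensional placement problem by dynamic programming over the path from left to right. Let the path be $(v_1,\dots,v_n)$ and orient every agent so that $s_a=v_\ell$ and $t_a=v_r$ with $\ell\le r$.

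\textbf{Step 1: each agent's cost depends only on two nearby stops.} Since all weights are $1$, riding and walking cost the same per edge. Consider an agent using stops $x$ (boarding) and $y$ (leaving) on a path. Its cost is at least the length of the walk $v_\ell\to x\to y\to v_r$ on the path, and the bus part $\pi_B(i,j)$ is at least $|x-y|$. We therefore claim that an agent's optimal cost is $(r-\ell)$ if some stop lies in $[\ell,r]$. It boards and leaves at that same stop, so $i=j$ and there is no bus cost. Otherwise its cost is $(r-\ell)+2\,d$, where $d$ is the distance from the interval $[\ell,r]$ to the nearest stop. Note that boarding and leaving at the same index $i=j$ is allowed and gives $\pi_B(i,i)=0$. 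As in the star case, the order of the route is thus irrelevant, and only the set $S$ of occupied vertices matters. Repetitions are useless, so we may assume $|S|\le k$ and pad with repeated stops; these are harmless since $i=j$ choices still suffice.

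\textbf{Step 2: dynamic program.} The objective becomes $\sum_a (r_a-\ell_a) + 2\sum_{a\,:\,S\cap[\ell_a,r_a]=\emptyset} \mathrm{dist}([\ell_a,r_a],S)$. This is a $k$-median-like problem on a line with interval clients. We sort the chosen positions $p_1<\dots<p_m$. An uncovered agent has its interval in a gap between consecutive stops, or left of $p_1$, or right of $p_m$. In each case its cost is determined solely by the two stops bounding the gap. We define $D[m,j]$ as the minimum cost over agents whose intervals end strictly before $v_j$ or contain a chosen stop left of or at $v_j$, given that $v_j$ is the $m$-th (rightmost so far) stop. The transition $D[m,j]=\min_{i<j} D[m-1,i]+g(i,j)$ uses $g(i,j)$, the total cost of agents with $i<\ell\le r<j$: each pays $2\min(\ell-i,\,j-r)$. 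Agents whose interval contains $v_i$ or $v_j$ pay $0$ extra. Boundary terms handle agents before the first and after the last stop, exactly as in the proof of Theorem~\ref{thm:energy_poly_on_arbitrary_paths}. Every $g(i,j)$ is computable in $\Oh(|A|)$, giving polynomial time $\Oh(n^2k|A|)$.

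\textbf{Main obstacle.} The delicate point is Step~1, namely justifying that no agent ever profits from using distinct boarding and exit stops. On a path, any walk visiting $x$ then $y$ from $v_\ell$ to $v_r$ has length at least that of a walk through a single stop in $\{x,y\}$ nearest to the interval. We would verify this by a short case analysis on the positions of $x,y$ relative to $[\ell,r]$. We also need to confirm that the constrained bus walk $\pi_B(i,j)$ is never shorter than $|x-y|$, which is immediate since the bus walk must still go from $x$ to $y$ on the path. We also must ensure that $|S|<k$ causes no loss, which follows because adding stops never increases cost.
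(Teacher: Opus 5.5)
Your proposal is correct and takes essentially the same route as the paper. Both use a left-to-right dynamic program over consecutive stop positions, adapted from the path algorithm for the energy objectives. Both rest on the observation that with $w_a = w_b \equiv 1$ an agent gains nothing from riding and only needs its walk to pass through some stop. The differences are minor. You separate out the constant $\sum_a (r_a-\ell_a)$ and charge only the detour penalty $2\min(\ell-i,\,j-r)$ per gap, whereas the paper sums full partial walking lengths per segment. You also justify the ``a single stop suffices'' claim via the bound $\pi_B(i,j)\ge |x-y|$ and the path geometry, which the paper only asserts.
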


\begin{proof}
	To obtain an algorithm for $f_{\text{time}}$ on unweighted paths we can slightly modify the dynamic programming algorithm for~$f_{\text{energy}}$ from Theorem~\ref{thm:energy_poly_on_arbitrary_paths}. Since the bus costs are now counted per agent instead of once globally, we only need to change the main recurrence to the following:
	$\min_{i\in [j - 1]}
    \left( D[k' - 1, v_i] + f'_A(v_i, v_j) \right)$.

To calculate~$f'_A(v_i, v_j)$ we need to sum up the contribution of each individual agent. For this we consider four different types of agents~$(v_\ell, v_r, w)\in A$. Note that, since~$w_a = w_b$, an agent never needs to actually take the bus. They just need to walk to any bus stop and then to their end point.

\smallskip
\noindent
\emph{Case~$1$:~$i \leq \ell \leq r \leq j$.} In this case the agent starts and ends between~$v_i$ and~$v_j$. Here the agent needs to walk to one of the two bus stops~$v_i$ and~$v_j$. The contribution of such an agent is~$\min(\pi(v_\ell, v_i, w) + \pi(v_i, v_r, w), \pi(v_\ell, v_j, w) + \pi(v_j, v_r, w))$.

\smallskip
\noindent
\emph{Case~$2$:~$\ell < i \leq r \leq j$.} In this case only the end point is between~$v_i$ and~$v_j$. Here the agent can take the direct path from~$v_\ell$ to~$v_r$ since~$v_i$ is on the way. The contribution of such an agent is~$\pi(v_i, v_r, w)$.

\smallskip
\noindent
\emph{Case~$3$:~$i \leq \ell \leq j < r$.} In this case only the start point is between~$v_i$ and~$v_j$. Analogously to Case~$2$, the agent can take the direct path. The contribution of such an agent is~$\pi(v_r, v_j, w)$.

\smallskip
\noindent
\emph{Case~$4$:~$\ell < i \leq j < r$.} In this case neither start nor end point are between~$v_i$ and~$v_j$. Here the agent can also take the direct path. The contribution of such an agent is~$\pi(v_i, v_j, w)$.

The rest of the algorithm works the same as in Theorem~\ref{thm:energy_poly_on_arbitrary_paths}.
\end{proof}

\section{Experiments}

We evaluate the stop-selection algorithms on the NYC-M15 corridor~\cite{mta-developers}, restricting the candidate set to existing stops. This models a setting common in public transport planning: the underlying corridor already exists, and the task is to choose which stops should be visited by an additional express line. Origin-destination pairs for the agents are obtained from Citi Bike NYC Trip History~\cite{citibike-data}. We consider the scenario of adding an express line with~$k=10$ stops, where each agent has walking costs that are a factor~$w$ larger than the bus route costs and~$w$ is drawn uniformly at random from~$[1,2]$.

We compare three $k=10$ selections: one minimizing~$g_{\mathrm{energy}}$, one minimizing~$f_{\mathrm{energy}}$, and one using uniform spacing. Both optimized selections substantially outperform uniform spacing, reducing their respective objective costs by about~$16.5\%$ and~$16.8\%$. Figure~\ref{fig:m15-existing-stop-selection-comparison} shows that the selected stop sets depend on the objective. The two optimized solutions overlap in parts of the corridor, but they are not identical, illustrating that the two objectives capture different aspects of the stop-selection problem. In particular, $g_{\mathrm{energy}}$, which allows agents to walk directly, gives a more compact stop placement because it does not produce artificially high costs for areas without bus stops. Since the preferred stop set changes with the objective, both objectives are important to consider when evaluating stop-selection quality.

\definecolor{ColorOpen}{RGB}{255,255,255}
\definecolor{ColorW2h}{RGB}{255,100,100}   
\definecolor{ColorW1h}{RGB}{255,200,100}   
\definecolor{ColorNPh}{RGB}{255,255,100}   
\definecolor{ColorP}{RGB}{150,200,50}      
\definecolor{ColorUnclear}{RGB}{180,180,180}

\begin{figure}[t]
	\centering
	
	\includegraphics[width=0.47\textwidth]{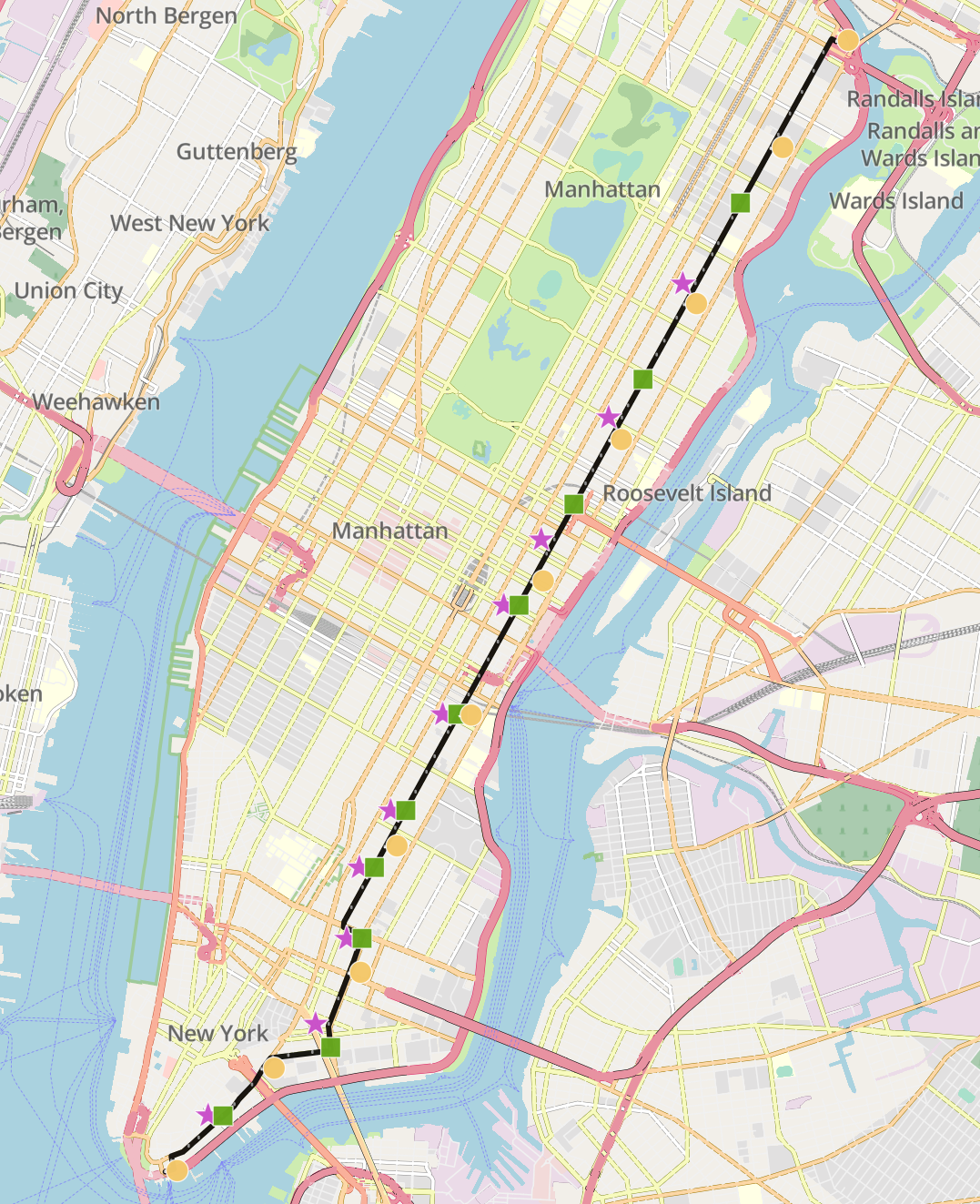}
	
	\vspace{0.8em}

	\definecolor{Purple}{RGB}{201,80,201}
	
	\resizebox{0.48\textwidth}{!}{%
		\centering
		\begin{tikzpicture}[
			x=1.15cm,
			y=1cm,
			gopt/.style={star, star point ratio=2.8, fill=Purple, draw=Purple, inner sep=0.9pt},
			fopt/.style={rectangle, fill=ColorP, draw=ColorP, inner sep=2.2pt},
			uniform/.style={circle,
				fill=ColorW1h, draw=ColorW1h, inner sep=2.0pt},
			legendlabel/.style={anchor=west},
			]
			
			\begin{scope}[xscale=-1, xshift=-0.458\textwidth] 
				\draw[black, line width=0.75pt] (0,0) -- (7.2,0);
				
				\foreach \xx in {0.000,0.108,0.217,0.342,0.470,0.596,0.677,1.006,1.088,1.239,1.304,1.396,1.477,1.602,1.680,1.816,1.898,2.050,2.201,2.271,2.407,2.495,2.583,2.661,2.790,2.994,3.078,3.245,3.386,3.599,3.705,3.801,3.912,4.034,4.134,4.278,4.369,4.487,4.602,4.702,4.814,4.939,5.053,5.173,5.315,5.422,5.610,5.751,5.892,6.021,6.128,6.254,6.335,6.420,6.517,6.655,6.743,6.908,6.975,7.107,7.200} {
					\draw[gray!55, line width=0.3pt] (\xx,-0.16) -- (\xx,0.16);
				}
				
				\foreach \xx in {1.477,2.271,2.994,3.386,4.034,4.602,4.939,5.422,5.892,6.743} {
					\node[gopt] at (\xx,0.13) {};
				}
				
				\foreach \xx in {1.006,2.050,2.790,3.386,4.034,4.602,4.939,5.422,6.021,6.743} {
					\node[fopt] at (\xx,0.00) {};
				}
				
				\foreach \xx in {0.000,0.677,1.602,2.407,3.245,4.034,4.814,5.610,6.420,7.200} {
					\node[uniform] at (\xx,-0.13) {};
				}
			\end{scope}
			
			\node[gopt] at (-0.15,-0.75) {};
			\node[font=\small,legendlabel] at (-0.02,-0.75) {$g_{\mathrm{energy}}$-optimal}; 
			
			\node[fopt] at (2.05,-0.75) {};
			\node[font=\small,legendlabel] at (2.18,-0.75) {$f_{\mathrm{energy}}$-optimal}; 
			
			\node[uniform] at (4.25,-0.75) {};
			\node[font=\small,legendlabel] at (4.38,-0.75) {uniform};
			
			\draw[gray!55, line width=0.3pt] (5.65,-0.9) -- (5.65,-0.6);
			\node[font=\small, anchor=west] at (5.75,-0.75) {existing stops};
			
		\end{tikzpicture}%
	}
	\caption{Comparison of three $k=10$ stop selections on the NYC-M15 corridor, restricted to existing stops.}
	\label{fig:m15-existing-stop-selection-comparison}
\end{figure}


\section{Conclusion}

There are several possibilities to expand our results. Clearly, the complexity of the four remaining path cases still needs to be resolved.  Another direction is to generalize the positive results obtained for paths to larger classes of graphs. However, for all objective functions except $f_\text{energy}$ we have hardness already for stars, which excludes positive results for graphs with pathwidth~2, the most obvious next step beyond paths. Thus, even stronger restrictions such as bounded maximum degree seem necessary to extend the positive results. 

Another and possibly more fruitful direction could be to consider further types of agent models, for example the case where edge weights for the agents are never smaller than bus-edge weights. One could also study an intermediate model between arbitrary weights and consistent weights, where there is only a bounded number of distinct agent weight functions. Finally, a parameter that is typically large in the instances constructed by our reductions is the number of agents. Thus, it remains unclear whether the problem is still hard when this parameter is small. From a theoretical perspective, it would be interesting to determine whether FPT algorithms parameterized by the number of agents exist.

\section*{Acknowledgements}

Jurek Rostalsky and Luca Pascal Staus
were supported by the Carl Zeiss Foundation, Germany,
within the project “Interactive Inference”.
Eva Deltl acknowledges her support by the Deutsche Forschungsgemeinschaft (German Research Foundation, DFG), project COMSOC-MPMS (grant agreement No.\ 465371386) and project MaMu (grant agreement No.\ 392018064).
\bibliographystyle{named}
\bibliography{refs}

@inproceedings{DBLP:conf/gis/ProisslK24,
  author       = {Claudius Proissl and
                  Daniel Koch},
  title        = {Revisiting the Bus Stop Problem in Road Networks},
  booktitle    = {Proceedings of the 32nd {ACM} International Conference on Advances
                  in Geographic Information Systems, {SIGSPATIAL}~'24, Atlanta, GA,
                  USA},
  pages        = {80--90},
  publisher    = {{ACM}},
  year         = 2024
}

@article{Tamir96,
  author       = {Arie Tamir},
  title        = {An $O(p n^2)$ algorithm for the p-median and related problems
                  on tree graphs},
  journal      = {Oper. Res. Lett.},
  volume       = {19},
  number       = {2},
  pages        = {59--64},
  year         = {1996},
}

@article{EP99,
author = {M. G. Everett and S. P. Borgatti},
title = {The centrality of groups and classes},
journal = {The Journal of Mathematical Sociology},
volume = {23},
number = {3},
pages = {181--201},
year = {1999},
publisher = {Routledge},
doi = {10.1080/0022250X.1999.9990219},
URL = { 
        https://doi.org/10.1080/0022250X.1999.9990219
},
eprint = { 
        https://doi.org/10.1080/0022250X.1999.9990219
}
}

@inproceedings{SKMS23,
  author       = {Luca Pascal Staus and
                  Christian Komusiewicz and
                  Nils Morawietz and
                  Frank Sommer},
  editor       = {Jonathan W. Berry and
                  David B. Shmoys and
                  Lenore Cowen and
                  Uwe Naumann},
  title        = {Exact Algorithms for Group Closeness Centrality},
  booktitle    = {{SIAM} Conference on Applied and Computational Discrete Algorithms,
                  {ACDA} 2023, Seattle, WA, USA, May 31 - June 2, 2023},
  pages        = {1--12},
  publisher    = {{SIAM}},
  year         = {2023},
  url          = {https://doi.org/10.1137/1.9781611977714.1},
  doi          = {10.1137/1.9781611977714.1},
  bibsource    = {dblp computer science bibliography, https://dblp.org}
}

@inproceedings{ACL+20,
  author       = {Haris Aziz and
                  Hau Chan and
                  Barton Lee and
                  Bo Li and
                  Toby Walsh},
  title        = {Facility Location Problem with Capacity Constraints: Algorithmic and
                  Mechanism Design Perspectives},
  booktitle    = {The Thirty-Fourth {AAAI} Conference on Artificial Intelligence, {AAAI}
                  2020, The Thirty-Second Innovative Applications of Artificial Intelligence
                  Conference, {IAAI} 2020, The Tenth {AAAI} Symposium on Educational
                  Advances in Artificial Intelligence, {EAAI} 2020},
  pages        = {1806--1813},
  publisher    = {{AAAI} Press},
  year         = {2020},
  url          = {https://doi.org/10.1609/aaai.v34i02.5547},
  doi          = {10.1609/AAAI.V34I02.5547},
  bibsource    = {dblp computer science bibliography, https://dblp.org}
}

@inproceedings{originalBus, author = {Radi Muhammad Reza and Mohammed Eunus Ali and Muhammad Aamir Cheema},
 title = {The Optimal Route and Stops for a Group of Users in a Road Network}, year = {2017}, isbn = {9781450354905}, publisher = {Association for Computing Machinery}, address = {New York, NY, USA}, url = {https://doi.org/10.1145/3139958.3140061}, doi = {10.1145/3139958.3140061}, booktitle = {Proceedings of the 25th ACM SIGSPATIAL International Conference on Advances in Geographic Information Systems}, articleno = {4}, numpages = {10}, location = {Redondo Beach, CA, USA}, series = {SIGSPATIAL'17} }

@book{DF13,
  author       = {Rodney G. Downey and
                  Michael R. Fellows},
  title        = {Fundamentals of Parameterized Complexity},
  series       = {Texts in Computer Science},
  publisher    = {Springer},
  year         = {2013},
  url          = {https://doi.org/10.1007/978-1-4471-5559-1},
  doi          = {10.1007/978-1-4471-5559-1},
  isbn         = {978-1-4471-5558-4},
  bibsource    = {dblp computer science bibliography, https://dblp.org}
}

@book{CFK+15,
  author       = {Marek Cygan and
                  Fedor V. Fomin and
                  Lukasz Kowalik and
                  Daniel Lokshtanov and
                  D{\'{a}}niel Marx and
                  Marcin Pilipczuk and
                  Michal Pilipczuk and
                  Saket Saurabh},
  title        = {Parameterized Algorithms},
  publisher    = {Springer},
  year         = {2015},
}

@article{KOC2020104987,
	title = {A review of vehicle routing with simultaneous pickup and delivery},
	journal = {Computers \& Operations Research},
	volume = {122},
	pages = {104987},
	year = {2020},
	issn = {0305-0548},
	doi = {https://doi.org/10.1016/j.cor.2020.104987},
	url = {https://www.sciencedirect.com/science/article/pii/S0305054820301040},
author = {{\c{C}}a{\u{g}}r{\i} Ko{\c{c}} and Gilbert Laporte and {\.I}lknur T{\"u}kenmez}
}

@Inbook{Ljubic20,
author="Ljubi{\'{c}}, Ivana",
title="Connected Facility Location Problems",
bookTitle="Encyclopedia of Optimization",
year="2020",
publisher="Springer International Publishing",
address="Cham",
pages="1--11",
isbn="978-3-030-54621-2",
doi="10.1007/978-3-030-54621-2_870-1",
url="https://doi.org/10.1007/978-3-030-54621-2_870-1"
}

@techreport{CNW83,
  title={The uncapacitated facility location problem},
  author={Cornu{\'e}jols, G{\'e}rard and Nemhauser, George and Wolsey, Laurence},
  year={1983},
  institution={Cornell University Operations Research and Industrial Engineering}
}

@article{FELLOWS20111118,
	title = {Facility location problems: A parameterized view},
	journal = {Discrete Applied Mathematics},
	volume = {159},
	number = {11},
	pages = {1118-1130},
	year = {2011},
	issn = {0166-218X},
	doi = {https://doi.org/10.1016/j.dam.2011.03.021},
	url = {https://www.sciencedirect.com/science/article/pii/S0166218X11001156},
	author = {Michael R. Fellows and Henning Fernau}
}

@article{AG20,
  author       = {Jonnatan Fernando Avil{\'{e}}s{-}Gonz{\'{a}}lez and
                  Jaime Mora{-}Vargas and
                  Neale R. Smith and
                  Miguel Gast{\'{o}}n Cedillo{-}Campos},
  title        = {Artificial intelligence and {DOE:} an application to school bus routing
                  problems},
  journal      = {Wireless Networks},
  volume       = {26},
  number       = {7},
  pages        = {4975--4983},
  year         = {2020},
}

@article{SKS+13,
  author       = {Patrick Schittekat and
                  Joris Kinable and
                  Kenneth S{\"{o}}rensen and
                  Marc Sevaux and
                  Frits C. R. Spieksma and
                  Johan Springael},
  title        = {A metaheuristic for the school bus routing problem with bus stop selection},
  journal      = {European Journal of Operational Research},
  volume       = {229},
  number       = {2},
  pages        = {518--528},
  year         = {2013},
  url          = {https://doi.org/10.1016/j.ejor.2013.02.025},
  doi          = {10.1016/J.EJOR.2013.02.025},
  bibsource    = {dblp computer science bibliography, https://dblp.org}
}

@article{PK10,
  author       = {Junhyuk Park and
                  Byung{-}In Kim},
  title        = {The school bus routing problem: {A} review},
  journal      = {European Journal of Operational Research},
  volume       = {202},
  number       = {2},
  pages        = {311--319},
  year         = {2010},
  url          = {https://doi.org/10.1016/j.ejor.2009.05.017},
  doi          = {10.1016/J.EJOR.2009.05.017},
  bibsource    = {dblp computer science bibliography, https://dblp.org}
}

@inproceedings{DBLP:conf/coco/Karp72,
  author       = {Richard M. Karp},
  title        = {Reducibility Among Combinatorial Problems},
  booktitle    = {Complexity of Computer Computations},
  series       = {The {IBM} Research Symposia Series},
  pages        = {85--103},
  publisher    = {Plenum Press, New York},
  year         = {1972}
}

@article{XiaoEtal2022, author = {Xiao, Mingyu and Zhang, Jianan and Lin, Weibo}, title = {Parameterized algorithms and complexity for the traveling purchaser problem and its variants}, year = {2022}, issue_date = {Nov 2022}, publisher = {Springer-Verlag}, address = {Berlin, Heidelberg}, volume = {44}, number = {4}, issn = {1382-6905}, url = {https://doi.org/10.1007/s10878-020-00608-x}, doi = {10.1007/s10878-020-00608-x}, journal = {Journal of Combinatorial Optimization}, month = nov, pages = {2269–2285}, numpages = {17} }

@article{MANERBA2017,
	title = {The Traveling Purchaser Problem and its variants},
	journal = {European Journal of Operational Research},
	volume = {259},
	number = {1},
	pages = {1-18},
	year = {2017},
	issn = {0377-2217},
	doi = {https://doi.org/10.1016/j.ejor.2016.12.017},
	url = {https://www.sciencedirect.com/science/article/pii/S0377221716310517},
	author = {Daniele Manerba and Renata Mansini and Jorge Riera-Ledesma}
}

@article{HK79,
  title={An algorithmic approach to network location problems II: The $p$-medians},
  author={Hakimi, S and Kariv, O},
  journal={SIAM Journal on Applied Mathematics},
  volume={37},
  number={3},
  pages={539--560},
  year={1979}
}

@article{ringstar1,
	author = {Labbé, Martine and Laporte, Gilbert and Martín, Inmaculada Rodríguez and González, Juan José Salazar},
	title = {The Ring Star Problem: Polyhedral analysis and exact algorithm},
	journal = {Networks},
	volume = {43},
	number = {3},
	pages = {177-189},
	doi = {https://doi.org/10.1002/net.10114},
	url = {https://onlinelibrary.wiley.com/doi/abs/10.1002/net.10114},
	eprint = {https://onlinelibrary.wiley.com/doi/pdf/10.1002/net.10114},
	year = {2004}
}

@article{MORADI2024110730,
	title = {Set Covering Routing Problems: A review and classification scheme},
	journal = {Computers \& Industrial Engineering},
	volume = {198},
	pages = {110730},
	year = {2024},
	issn = {0360-8352},
	doi = {https://doi.org/10.1016/j.cie.2024.110730},
	url = {https://www.sciencedirect.com/science/article/pii/S0360835224008520},
	author = {Nima Moradi and Fereshteh Mafakheri and Chun Wang},
}

@article{GuihaireHao08,
	author  = {Guihaire, Val{\'e}rie and Hao, Jin-Kao},
	title   = {Transit network design and scheduling: A global review},
	journal = {Transportation Research Part A: Policy and Practice},
	volume  = {42},
	number  = {10},
	pages   = {1251--1273},
	year    = {2008},
	doi     = {10.1016/j.tra.2008.03.011}
}

@article{Owais26,
	author  = {Owais, Mahmoud},
	title   = {Transit network design problem: a half century of methodological research},
	journal = {Innovative Infrastructure Solutions},
	volume  = {11},
	number  = {3},
	year    = {2026},
	doi     = {10.1007/s41062-025-02356-5}
}

@article{DerribleKennedy11,
	author  = {Derrible, Sybil and Kennedy, Christopher},
	title   = {Applications of Graph Theory and Network Science to Transit Network Design},
	journal = {Transport Reviews},
	volume  = {31},
	number  = {4},
	pages   = {495--519},
	year    = {2011},
	doi     = {10.1080/01441647.2010.543709}
}

@article{FurthRahbee00,
	author  = {Furth, Peter G. and Rahbee, Adam B.},
	title   = {Optimal Bus Stop Spacing Through Dynamic Programming and Geographic Modeling},
	journal = {Transportation Research Record},
	volume  = {1731},
	number  = {1},
	pages   = {15--22},
	year    = {2000},
	doi     = {10.3141/1731-03}
}

@online{citibike-data,
	title   = {Citi Bike System Data},
	author  = {{Citi Bike}},
	url     = {https://citibikenyc.com/system-data},
	urldate = {2026-06-05},
	note={\url{https://citibikenyc.com/system-data}}
}

@online{mta-developers,
	title   = {MTA Developer Resources},
	author  = {{Metropolitan Transportation Authority}},
	url     = {https://www.mta.info/developers},
	urldate = {2026-06-05},
	note={\url{https://www.mta.info/developers}}
}

\end{document}